\documentclass[11pt]{article}
\usepackage{graphicx}
\usepackage{soul}
\usepackage[colorlinks]{hyperref}
\usepackage[usenames,dvipsnames]{xcolor} 
\usepackage[english]{babel}
\usepackage{url,subcaption}
\usepackage{amsmath,amssymb,amsthm,mathtools,bbm}
\usepackage{booktabs}
\usepackage{siunitx}
\usepackage{tikz}
\usepackage{pgfplots}
\usepackage{algorithm}
\usepackage{algpseudocode}
\usepackage{color}

\usepackage[left=1in,right=1in,top=1in,bottom=1in]{geometry}

\newtheorem{theorem}{Theorem}[section]

\newtheorem{corollary}[theorem]{Corollary}

\newtheorem{remark}[theorem]{Remark}

\theoremstyle{definition}
\newtheorem{example}[theorem]{Example}

\def\E{\mathbb{E}}

\title{A general-purpose sensitivity method for multiple simultaneous parameter perturbations in stochastic reaction networks}

\author{
David F. Anderson\thanks{Department of Mathematics, University of
Wisconsin-Madison, USA.  anderson@math.wisc.edu.}
\and 
Jingyi Ma\thanks{Department of Mathematics, University of
Wisconsin-Madison, USA.  jma276@wisc.edu}
}

\begin{document}

\maketitle

\begin{abstract}
Stochastic reaction networks are continuous-time Markov chain models for interacting populations, with applications in biochemistry, epidemiology, ecology, and related areas. We study finite-difference sensitivity estimation when a single estimator requires several nearby parameterized paths. Existing variance-reducing couplings are typically pairwise, so that repeated use is either inefficient or requires application-specific choices in multi-path settings. We introduce the multi-path stacked coupling (MSC), a space-time Poisson construction that jointly generates any finite collection of parameterized paths. Each pairwise marginal of MSC has the same law as the corresponding split coupling pair, allowing existing variance bounds to transfer directly; in finite-state settings, we also obtain first-order expansions for the mean and second moment of finite-difference numerators. We apply MSC in three settings of practical importance: estimating many first derivatives simultaneously, estimating a single first derivative using a wider finite-difference stencil, and estimating higher-order derivatives. Numerical experiments on a processive phosphorylation network demonstrate strong performance in each of the three application areas considered, consistent with the theoretical advantages of MSC: across all three applications, MSC achieves the smallest root mean square error (RMSE) among the methods considered over the tested computational budgets.
\end{abstract}

     {\bf Keywords:} stochastic reaction networks; multi-path stacked coupling; parametric sensitivity analysis; finite-difference estimators; coupling methods; variance reduction; Monte Carlo simulation.
     
     {\bf MSC:} Primary 65C40; Secondary 60J27, 65C05, 60J22, 92C40.

\section{Introduction}

Stochastic reaction networks, commonly modeled as continuous-time Markov chains on nonnegative integer lattices, provide a natural framework for systems of interacting populations. They are widely used for the quantitative study of intracellular biochemical systems, with applications including gene regulatory networks \cite{arkin1998stochastic,elowitz2002stochastic,mcadams1997stochastic}, viral kinetics \cite{Yin2002}, signal transduction pathways \cite{alon2019introduction,kholodenko2006cell}, and metabolic networks \cite{palsson2006systems}. The same continuous-time Markov jump structure also appears in stochastic population, ecological, and epidemic models \cite{allen2010introduction, andersson2012stochastic}. Such models depend on a number of parameters, and it is therefore natural to ask how outputs of interest respond to perturbations of those parameters.
Parametric sensitivity analysis quantifies this dependence through derivatives of expected model outputs with respect to model parameters, such as the expected abundance of a reactant at a fixed time or the probability of a specified model behavior \cite{anderson2012efficient,gupta2013unbiased}.
Such information can help reveal system behavior, identify important parameters and reactions \cite{gunawan2005sensitivity}, and guide tasks such as parameter  estimation and experimental design \cite{gunawan2005sensitivity,plyasunov2007efficient,srivastava2013comparison}.

The computational study of parametric sensitivities for stochastic reaction networks has developed into a substantial research area, and a number of numerical approaches have been proposed. For example, likelihood-ratio gradient estimators for both discrete-time and continuous-time systems were introduced in a general setting by Glynn \cite{glynn1990likelihood} and later extended to the stochastic reaction network setting by Arkin and Plyasunov \cite{plyasunov2007efficient}. Pathwise-differentiation methods have also been studied. In particular, Sheppard et al.~proposed a regularized pathwise method that is applicable to a subset of stochastic chemical models \cite{sheppard2012pathwise}, whereas Anderson and Wolf proposed a hybrid pathwise differentiation method that is unbiased and applicable to a broad class of stochastic chemical models \cite{wolf2015hybrid}. The hybrid pathwise method is close in spirit to the multi-level Monte Carlo methods developed by Giles \cite{giles2008multilevel} and extended to the stochastic reaction network setting by Anderson and Higham \cite{anderson2012multilevel}.
Multiple coupling-based methods built on finite-difference estimators have also been developed to improve efficiency, including the classical common random number method \cite{law2007simulation}, the common reaction path method \cite{rathinam2010efficient}, split coupling (also called coupled finite differences) \cite{anderson2012efficient,anderson2014asymptotic}, and the stacked coupling method \cite{anderson2019low}. Most of these coupling methods are designed for pairwise constructions, though there are important special-purpose extensions to more complicated settings, such as CRP-style shared-path constructions and the double-coupled finite difference method for second derivatives \cite{rathinam2010efficient,wolf2012finite}. Despite their differences, these approaches share a common goal: reducing estimator variance in order to lower the overall computational cost of sensitivity estimation, especially in settings where such calculations appear inside optimization routines \cite{srivastava2014parameter}.

Most existing methods are naturally formulated around a single perturbation or a pairwise construction. However, many important computational tasks require several nearby parameter values to be perturbed simultaneously. Perhaps the clearest example is the simultaneous estimation of $K>1$ first derivatives, where the most straightforward extension of the existing methods is simply to repeat the basic algorithm $K$ times. Similar issues arise when wider finite-difference stencils are used for the estimation of a single first derivative, or when higher-order derivatives are sought. This leads to the central question of the present paper:

\vspace{.1in}

\noindent \textbf{Motivating question}: Is it possible to exploit the structure of stochastic reaction networks to design efficient couplings for situations in which multiple nearby parameter values are perturbed simultaneously?

\vspace{.1in}

In this paper, we address this question by extending the stacked coupling method of Anderson and Yuan \cite{anderson2019low} to a genuinely multi-path setting. The resulting method, which we term the \emph{multi-path stacked coupling} (MSC), provides a general-purpose coupling framework for finite-difference sensitivity estimation in stochastic reaction networks when multiple nearby parameter values are required simultaneously. Thus, our goal is not merely to handle one additional special case, but rather to provide a clean organizational framework that applies across several natural computational settings.

To keep the discussion concrete, we work throughout with the standard stochastic model for a chemical reaction network: a continuous-time Markov chain $X^\theta = \{X^\theta(t):t\geq 0\}$ on $\mathbb{Z}_{\geq 0}^d$, parameterized by $\theta\in\mathbb{R}^K$, with $R$ reaction channels, reaction vectors $\zeta_\ell$, and intensities $\lambda_\ell^\theta$, $\ell\in\{1,\dots,R\}$.
This process has infinitesimal generator
\begin{align*}
    \mathcal{L}^\theta f(x) = \sum_{\ell=1}^R \lambda_\ell^\theta(x)\bigl(f(x+\zeta_\ell)-f(x)\bigr),
\end{align*}
and its forward equation, often termed the chemical master equation, is
\begin{align*}
    \frac{d}{dt}p^\theta(t,x)
    =
    \sum_{\ell=1}^R \lambda_\ell^\theta(x-\zeta_\ell)p^\theta(t,x-\zeta_\ell)
    -
    \sum_{\ell=1}^R \lambda_\ell^\theta(x)p^\theta(t,x),
\end{align*}
where $p^\theta(t,x)=\mathbb{P}(X^\theta(t)=x)$; see, for example, \cite{anderson2011continuous, anderson2015stochastic}.

Under stochastic mass-action kinetics, if reaction $\ell$ consumes
$\nu_{i\ell}$ molecules of species $i$, then its intensity has the form
\begin{equation*}
\lambda_\ell^\theta(x)
=
\kappa_\ell
\prod_{i=1}^d (x_i)_{\nu_{i\ell}},
\end{equation*}
where $(x)_m=x(x-1)\dots(x-m+1)$ and $\kappa_\ell$ is the corresponding rate constant.
In this case, $\theta$ could be the vector of rate constants, in which case $K=R$.

For a quantity of interest of the form
\begin{align*}
g(\theta)=\mathbb{E}[f(X^\theta(T))],
\end{align*}
where $f$ is a function yielding an observable of interest,
we are interested in computational settings where finite-difference approximations require the simultaneous generation of several nearby parameterized processes, such as
\begin{align*}
X^\theta,\quad X^{\theta+\varepsilon_1 e_1},\quad \dots,\quad X^{\theta+\varepsilon_K e_K},
\end{align*}
where $e_i$ is the $i$th canonical basis vector in $\mathbb R^K$ and each $\varepsilon_i$ is a small perturbation size.
The key point is that existing coupling methods involve different tradeoffs in such settings. Repeated application of the split coupling yields favorable pairwise variance behavior, but it requires separate coupled pairs to be generated for each perturbation direction. In contrast, the common reaction path constructions allow randomness to be shared across many perturbed systems simultaneously, thereby reducing the number of simulated trajectories required per Monte Carlo replication, though typically at the cost of weaker variance control. The MSC framework is designed to combine the main advantages of these two viewpoints: it generates all required nearby paths simultaneously, with the same favorable path-count structure as the common reaction path method, while retaining the strong pairwise coupling behavior associated with the split coupling.

We highlight three main application areas, which will be  developed further in Section \ref{sec:FD_settings} before being numerically studied in Sections \ref{Motivation 1}, \ref{Motivation 2}, and \ref{Motivation 3}.

\begin{enumerate}
    \item \textbf{Simultaneous computation of many first derivatives.}
    
    This is the clearest and most important application. Suppose one wishes to estimate all $K$ first-order sensitivities simultaneously.
    In a single Monte Carlo replication, repeated use of split coupling requires the generation of $2K$ sample paths, since one coupled nominal/perturbed pair must be generated for each derivative direction. In contrast, the common reaction path method requires only $K+1$ sample paths per replication: one nominal path together with the $K$ perturbed paths. The MSC framework has this same $K+1$ scaling per replication, while retaining the favorable pairwise variance properties associated with the split coupling. Thus, when many derivatives are required simultaneously, MSC combines the main organizational advantage of CRP with the variance reduction of split coupling.

    \item \textbf{Higher-order finite-difference estimation of a single first derivative.}
    
    Even when estimating a single first derivative, improved finite-difference formulas often require wider stencils and hence multiple nearby parameter values. MSC provides a systematic way to couple the resulting collection of paths within each Monte Carlo replication. This makes it possible to combine the reduced bias of higher-order finite-difference formulas with favorable variance behavior, and hence to obtain improved overall performance.

    \item \textbf{Higher-order derivatives.}
    
    For second-, third-, and higher-order derivatives, as well as mixed partial derivatives, the relevant finite-difference formulas can involve several nearby parameter values, and constructing effective couplings by hand quickly becomes cumbersome. MSC provides a clean, out-of-the-box framework for such settings. While specially tailored methods may perform better in particular problems, MSC offers a broadly applicable general approach that avoids the need for problem-specific coupling constructions.
\end{enumerate}

The remainder of the paper is organized as follows. In Section~\ref{sec:FD_settings}, we make precise the three representative finite-difference settings described above, each of which naturally involves multiple nearby parameterized sample paths. In Section~\ref{sec:MSC_framework}, we introduce the multi-path stacked coupling, present the corresponding simulation algorithm, and establish its main theoretical properties. In Section~\ref{General optimal RMSE scaling for finite-difference estimators}, we give a unified bias-variance analysis and derive the optimal RMSE scaling rates for the different path-generation methods. Finally, in Section~\ref{sec:numerics}, we introduce the processive phosphorylation network used throughout the numerical study and apply the different methods to the three finite-difference settings considered in Section~\ref{sec:FD_settings}.

\section{Finite-difference settings requiring multiple paths}
\label{sec:FD_settings}

With $g(\theta)=\mathbb{E}[f(X^\theta(T))]$ as in the introduction, we now make precise the three representative finite-difference settings highlighted above. In each case, the estimator involves multiple nearby parameterized sample paths, and the purpose of this brief section is simply to record the corresponding path collections that arise.

We begin with the standard forward finite-difference approximation for a first-order sensitivity. Under the usual smoothness assumptions on $g$, for $j\in\{1,\dots,K\}$,
\begin{align}
\label{eq:forward-fd}
\frac{\partial}{\partial \theta_j}g(\theta)
=
\frac{\mathbb{E}\big[f(X^{\theta+\varepsilon_j e_j}(T))\big]-\mathbb{E}\big[f(X^\theta(T))\big]}{\varepsilon_j}
+ O(\varepsilon_j),
\end{align}
where $e_j$ is the $j$th unit vector in $\mathbb{R}^K$ and $\varepsilon_j$ is the perturbation size. The corresponding Monte Carlo estimator is
\begin{align}
\label{eq:mc-forward-fd}
\widehat{D}_j(\theta)
=
\frac{1}{N}\sum_{i=1}^N d_{[i]}(\varepsilon_j),
\qquad
d_{[i]}(\varepsilon_j)
=
\frac{f(X^{\theta+\varepsilon_j e_j}_{[i]}(T))-f(X^\theta_{[i]}(T))}{\varepsilon_j},
\end{align}
where $\big(X^\theta_{[i]},X^{\theta+\varepsilon_j e_j}_{[i]}\big)$ is the $i$th coupled pair of sample paths.

We now record three representative settings in which finite-difference estimators require more than a single coupled pair.

\begin{enumerate}
    \item \textbf{Simultaneous computation of many first derivatives.}

    If $\theta$ is multi-dimensional and one wishes to estimate all first-order sensitivities, then for each Monte Carlo replication one is naturally led to the collection of sample paths
    \begin{align}
    \label{eq:many-first-derivatives}
    \left(
    X^\theta_{[i]},
    X^{\theta+\varepsilon_1 e_1}_{[i]},
    \dots,
    X^{\theta+\varepsilon_K e_K}_{[i]}
    \right).
    \end{align}
    Thus, although each individual derivative is approximated by a two-point finite difference, the overall computation requires a multi-path construction.
\item \textbf{Higher-order finite-difference estimation of a single first derivative.}

Even when only a single first derivative is sought, higher-order finite-difference formulas may require wider stencils.

The standard centered approximation
\begin{align}
\label{eq:second-order-first-derivative}
\frac{\partial}{\partial \theta_j}g(\theta)
&=
\frac{
\mathbb{E}\big[f(X^{\theta+\varepsilon_j e_j}(T))\big]
-
\mathbb{E}\big[f(X^{\theta-\varepsilon_j e_j}(T))\big]
}
{2\varepsilon_j}
+O(\varepsilon_j^2)
\end{align}
requires, for each Monte Carlo replication, the two-path collection
\begin{align}
\label{eq:second-order-first-derivative-paths}
\left(
X^{\theta-\varepsilon_j e_j}_{[i]},
X^{\theta+\varepsilon_j e_j}_{[i]}
\right).
\end{align}
A higher-order alternative is the fourth-order centered approximation
\begin{align}
\label{eq:fourth-order-first-derivative}
\frac{\partial}{\partial \theta_j}g(\theta)
&=
\frac{1}{12\varepsilon_j}
\Big[
-\mathbb{E}\big[f(X^{\theta+2\varepsilon_j e_j}(T))\big]
+8\,\mathbb{E}\big[f(X^{\theta+\varepsilon_j e_j}(T))\big] \nonumber\\
&\qquad
-8\,\mathbb{E}\big[f(X^{\theta-\varepsilon_j e_j}(T))\big]
+\mathbb{E}\big[f(X^{\theta-2\varepsilon_j e_j}(T))\big]
\Big]
+ O(\varepsilon_j^4)
\end{align}
which requires, for each Monte Carlo replication, the four-path collection
\begin{align}
\label{eq:wider-stencil-paths}
\left(
X^{\theta+2\varepsilon_j e_j}_{[i]},
X^{\theta+\varepsilon_j e_j}_{[i]},
X^{\theta-\varepsilon_j e_j}_{[i]},
X^{\theta-2\varepsilon_j e_j}_{[i]}
\right).
\end{align}
Thus, wider stencils for a single first derivative also naturally lead to a multi-path coupling problem.

    \item \textbf{Higher-order derivatives.}

    A similar issue arises for higher-order sensitivities. For example, a centered finite-difference approximation for the third derivative is
    \begin{align}
    \label{eq:third-derivative-fd}
    \frac{\partial^3}{\partial \theta_j^3}g(\theta)
    &=
    \frac{1}{2\varepsilon_j^3}
    \Big[
    \mathbb{E}\big[f(X^{\theta+2\varepsilon_j e_j}(T))\big]
    -2\,\mathbb{E}\big[f(X^{\theta+\varepsilon_j e_j}(T))\big] \nonumber\\
    &\qquad
    +2\,\mathbb{E}\big[f(X^{\theta-\varepsilon_j e_j}(T))\big]
    -\mathbb{E}\big[f(X^{\theta-2\varepsilon_j e_j}(T))\big]
    \Big]
    + O(\varepsilon_j^2).
    \end{align}
    This approximation again requires, for each Monte Carlo replication, the four-path collection
    \begin{align}
    \label{eq:third-derivative-paths}
    \left(
    X^{\theta+2\varepsilon_j e_j}_{[i]},
    X^{\theta+\varepsilon_j e_j}_{[i]},
    X^{\theta-\varepsilon_j e_j}_{[i]},
    X^{\theta-2\varepsilon_j e_j}_{[i]}
    \right).
    \end{align}
    More generally, higher-order derivatives and mixed partial derivatives can require increasingly complicated collections of nearby parameterized sample paths.
\end{enumerate}

The common feature in \eqref{eq:many-first-derivatives}, \eqref{eq:wider-stencil-paths}, and \eqref{eq:third-derivative-paths} is that the relevant finite-difference estimators naturally involve multiple nearby sample paths. This is precisely the setting for which we introduce the multi-path stacked coupling in the next section.

\section{The multi-path stacked coupling framework}
\label{sec:MSC_framework}

We now introduce the multi-path stacked coupling framework. The section has two main parts. First, in Section~\ref{sec:msc_subsection}, we define the MSC construction, give the corresponding simulation algorithm, and provide a simple illustration of the stacked Poisson-space representation. Second, in Section~\ref{sec:MSC_theory}, we establish the theoretical properties of MSC that will be used in the applications below. In particular, we show that the relevant pairwise marginals under MSC have the same law as the corresponding split-coupled pairs. We then use this pairwise equivalence, together with existing variance bounds for the split coupling, to obtain the variance-scaling results needed for the finite-difference estimators studied later. We also give a finite-state refinement that provides leading-order expansions for the mean and second moment of the corresponding finite-difference numerators.

\subsection{Multi-path stacked coupling}
\label{sec:msc_subsection}

To define the multi-path stacked coupling method, we first recall a space-time Poisson representation of a stochastic reaction network, which is an alternative to Kurtz's random-time-change representation \cite{kurtz1980representations}. Let $\mathcal{N}$ denote a unit-rate Poisson point process on $[0,\infty)\times [0,\infty)$. For a fixed parameter vector $\theta\in\mathbb R^K$, consider a reaction network with reaction channels indexed by $\ell\in\{1,\dots,R\}$, reaction vectors $\zeta_\ell$, and intensities $\lambda_\ell^\theta(x)$. Define
\begin{equation*}
q_\ell^\theta(t)
=
\sum_{r=1}^\ell \lambda_r^\theta(X^\theta(t)),
\qquad
q_0^\theta(t)=0.
\end{equation*}
Then $X^\theta$ may be represented as the solution to
\begin{align}
\label{cdjskbfdnvsdc}
X^\theta(t)
&=
X^\theta(0)
+
\sum_{\ell=1}^R \zeta_\ell
\int_{[0,t]\times[0,\infty)}
1_{\left[
q^\theta_{\ell-1}(s-),
q^\theta_{\ell-1}(s-)
+
\lambda_\ell^\theta\left(X^\theta(s-)\right)
\right)}(x)
\mathcal{N}(ds\times dx).
\end{align}
This representation gives a continuous-time Markov chain with generator
\begin{equation*}
\mathcal L^\theta f(x)
=
\sum_{\ell=1}^R
\lambda_\ell^\theta(x)
\left(f(x+\zeta_\ell)-f(x)\right),
\end{equation*}
and hence has the same law as the standard stochastic reaction network with parameter $\theta$.

We now apply this representation to a finite collection of parameter vectors. Let
\[
\theta_0,\theta_1,\dots,\theta_M\in\mathbb R^K,
\]
where $K$ is the dimension of the parameter space and $M+1$ is the number of processes to be constructed simultaneously. In the simultaneous first-derivative application, one could have $M=K$, with $\theta_0=\theta$ and $\theta_j=\theta+\varepsilon_j e_j$ for $j\in \{1,\dots,K\}$. However, the MSC construction itself applies to any finite collection of parameter vectors.

Equation \eqref{cdjskbfdnvsdc} immediately suggests a coupling of the processes $X^{\theta_0},X^{\theta_1},\dots,X^{\theta_M}$: one may construct all processes with the same realization of the Poisson point process $\mathcal{N}$, while allowing each process to use its own cumulative intensity intervals. That is, for each $j\in \{0,\dots,M\}$, we could define
\begin{align}
\label{cdjskbfdnvsdc_multi}
X^{\theta_j}(t)
=
X^{\theta_j}(0)
+
\sum_{\ell=1}^R \zeta_\ell
\int_{[0,t]\times[0,\infty)}
1_{\left[
q^{\theta_j}_{\ell-1}(s-),
q^{\theta_j}_{\ell-1}(s-)
+\lambda_\ell^{\theta_j}\left(X^{\theta_j}(s-)\right)
\right)}(x)
\mathcal{N}(ds\times dx).
\end{align}
This common-$\mathcal{N}$ construction is a valid coupling method in its own right. However, it has an important limitation: the same Poisson point may fall in different reaction-channel intervals for different parameter values. Thus, simultaneous jumps need not occur through the same reaction channel \cite{anderson2019low}.

We now develop the multi-path stacked coupling, which fixes this issue by assigning a common region of Poisson space to each reaction channel.
For the finite collection of parameter values $\{\theta_j:j=0,\dots,M\}$, define the channel-wise envelope rates 
\[
\bar\lambda_\ell(s)
=
\sup_{0\leq j\leq M}
\lambda_\ell^{\theta_j}\left(X^{\theta_j}(s)\right),
\qquad
\bar q_\ell(s)=\sum_{r=1}^\ell \bar\lambda_r(s),
\qquad
\bar q_0(s)=0.
\]
In the stacked construction, reaction channel $\ell$ is assigned the vertical strip
\[
\left[\bar q_{\ell-1}(s),\bar q_{\ell-1}(s)+\bar\lambda_\ell(s)\right),
\]
whose height is $\bar\lambda_\ell(s)$. Within this strip, the process $X^{\theta_j}$ accepts a Poisson point only if that point falls in the subinterval of height $\lambda_\ell^{\theta_j}(X^{\theta_j}(s))$. Consequently, if several processes jump at the same Poisson point, then they necessarily jump through the same reaction channel $\ell$.

The multi-path stacked coupling is therefore defined, for $j\in\{0,\dots,M\}$, by
\begin{align}
\label{representation 123123}
\begin{split}
X^{\theta_j}(t)
&=
X^{\theta_j}(0)
+
\sum_{\ell=1}^R \zeta_\ell
\int_{[0,t]\times[0,\infty)}
1_{\left[\bar q_{\ell-1}(s-),\bar q_{\ell-1}(s-)
+\lambda_\ell^{\theta_j}\left(X^{\theta_j}(s-)\right)\right)}(x)
\mathcal{N}(ds\times dx),
\end{split}
\end{align}
where $\bar q_\ell$ and $\bar\lambda_\ell$ are as defined above.

We next turn from the stochastic representation \eqref{representation 123123} to the precise numerical algorithm used to simulate it. For simplicity, and as in the applications below, we state the algorithm for the case in which all parameterized processes have the same fixed initial condition $x$. At each step, the channel-wise envelope rates $\bar\lambda_\ell$ are computed from the current states of all paths, a candidate Poisson point is generated under the total envelope rate, and each path accepts the point according to its own intensity within the selected reaction channel.

\begin{algorithm}[H]
\caption{Multi-path stacked coupling algorithm}
\label{alg:multi-path-stacked-coupling}
\begin{algorithmic}[1]
\Statex Fix parameter vectors $\theta_j$, $j\in \{0,\dots,M\}$, a terminal time $T>0$, and a common initial condition $x$. Initialize $t=0$ and $X^{\theta_j}=x$ for all $j\in\{0,\dots,M\}$. Repeat the following steps while $t<T$. All random variables generated below are independent of those generated previously.

   \State Compute all intensities $\lambda_\ell^{\theta_j}\left(X^{\theta_j}\right)$ for all $\ell \in\{1, \dots, R\}$ and $j \in \{0,\dots,M\}$. Then compute all $\bar{q}_\ell=\sum_{r=1}^\ell \bar{\lambda}_r$ with $\bar{\lambda}_r=\sup _j\left\{\lambda_r^{\theta_j}\left(X^{\theta_j}\right)\right\}$.
    
    \State Generate a unit exponential random variable $\tau$  and set $\Delta=\frac{\tau}{\bar{q}_R}$. (If $\bar{q}_R = 0$, set $\Delta = \infty$.)
    
    \If {$t+\Delta>T$}
        \State Set $X^{\theta_j}(u)=X^{\theta_j}$ for all $j \in \{0,\dots,M\}$ and $u \in[t,T]$.
        \State Set $t=T$ and end the loop.
    \Else
        \State Set $X^{\theta_j}(u)=X^{\theta_j}$  for all $j \in \{0,\dots,M\}$ and $u \in[t,t+\Delta)$.
        \State Generate a uniform random variable $\xi$ on $[0,1]$.
            \State Find $\mu \in\{1, \dots, R\}$ for which
            $\frac{\bar{q}_{\mu-1}}{\bar{q}_R} \leq \xi < \frac{\bar{q}_\mu}{\bar{q}_R}$.
            \ForAll{$j \in \{0,\dots,M\}$}
            \If{
            $\frac{\bar{q}_{\mu-1}}{\bar{q}_R} \leq \xi < \frac{\bar{q}_{\mu-1}+\lambda_\mu^{\theta_j}\left(X^{\theta_j}\right)}{\bar{q}_R}$
            }
            \State $X^{\theta_j} \leftarrow X^{\theta_j}+\zeta_\mu$.
            \EndIf 
            \EndFor 
        \State Set $t \leftarrow t+\Delta$ and return to Step 1.
    \EndIf 
\end{algorithmic}
\end{algorithm}

We close this subsection with a simple example illustrating how the stacked Poisson-space construction determines which processes accept a given Poisson point.

\begin{example}
We consider a reaction network with two species and two reactions,
\begin{align*}
\text{Reaction 1:} & \quad 2A \xrightarrow{k_1} B,
\qquad \zeta_1=(-2,1),\\
\text{Reaction 2:} & \quad B \xrightarrow{k_2} 2A,
\qquad \zeta_2=(2,-1).
\end{align*}
For a parameter vector $\theta=(k_1,k_2)$, the intensity functions are
\begin{align*}
\lambda_1^\theta(x_A,x_B) 
= k_1 x_A(x_A-1), \quad 
\lambda_2^\theta(x_A,x_B)
= k_2 x_B.
\end{align*}
We compare three parameter vectors,
\[
    \theta_0=(20,20),\qquad
    \theta_1=(25,25),\qquad
    \theta_2=(30,30),
\]
so that, for example, both rate constants are equal to $20$ for $X^{\theta_0}$, both are equal to $25$ for $X^{\theta_1}$, and both are equal to $30$ for $X^{\theta_2}$.
We use the common initial condition
\[
X^{\theta_0}(0)=X^{\theta_1}(0)=X^{\theta_2}(0)=(20,200).
\]
To make the construction explicit, suppose that the underlying space-time Poisson point process has two relevant points, one at time $t=1$ with vertical coordinate $3000$, and one at time $t=2$ with vertical coordinate $14800$. We now determine which processes accept these points under the MSC construction.

At the initial state $(20,200)$, the reaction-1 intensities are
\[
\lambda_1^{\theta_0}=20\cdot20\cdot19=7600,\qquad
\lambda_1^{\theta_1}=25\cdot20\cdot19=9500,\qquad
\lambda_1^{\theta_2}=30\cdot20\cdot19=11400,
\]
and the reaction-2 intensities are
\[
\lambda_2^{\theta_0}=20\cdot200=4000,\qquad
\lambda_2^{\theta_1}=25\cdot200=5000,\qquad
\lambda_2^{\theta_2}=30\cdot200=6000.
\]
Taking the maximum over the three processes in each reaction channel yields
\[
\bar\lambda_1=11400,\qquad \bar\lambda_2=6000.
\]
Thus, before the first jump, reaction channel $1$ occupies the vertical interval $[0,11400)$ and reaction channel $2$ occupies the vertical interval $[11400,17400)$.

The first Poisson point, which we recall is assumed to be at $(1,3000)$, has vertical coordinate $3000$. Since $3000<11400$, the point lies in the reaction-1 region. Moreover,
\[
3000<7600,\qquad 3000<9500,\qquad 3000<11400,
\]
so all three processes accept this point as a reaction-1 event. Therefore all three processes are updated by $\zeta_1=(-2,1)$, giving
\[
X^{\theta_0}(1)=X^{\theta_1}(1)=X^{\theta_2}(1)=(18,201).
\]

Before the second Poisson point, all three processes are therefore at state $(18,201)$. The reaction-1 intensities are
\[
\lambda_1^{\theta_0}=20\cdot18\cdot17=6120,\qquad
\lambda_1^{\theta_1}=25\cdot18\cdot17=7650,\qquad
\lambda_1^{\theta_2}=30\cdot18\cdot17=9180,
\]
and the reaction-2 intensities are
\[
\lambda_2^{\theta_0}=20\cdot201=4020,\qquad
\lambda_2^{\theta_1}=25\cdot201=5025,\qquad
\lambda_2^{\theta_2}=30\cdot201=6030.
\]
Hence
\[
\bar\lambda_1=9180,\qquad \bar\lambda_2=6030.
\]
Thus reaction channel $1$ occupies the vertical interval $[0,9180)$, while reaction channel $2$ occupies the vertical interval $[9180,15210)$.

The second Poisson point, which we recall is assumed to be at $(2,14800)$, has vertical coordinate $14800$. Since
\[
9180<14800<15210,
\]
this point lies in the reaction-2 region. Within that region, the acceptance intervals for the three processes are
\[
[9180,13200) \quad \text{for } X^{\theta_0},\qquad
[9180,14205) \quad \text{for } X^{\theta_1},\qquad
[9180,15210) \quad \text{for } X^{\theta_2}.
\]
Since $14205<14800<15210$, only $X^{\theta_2}$ accepts the point. Consequently,
\[
X^{\theta_2}(2)=(18,201)+(2,-1)=(20,200),
\]
while $X^{\theta_0}$ and $X^{\theta_1}$ remain unchanged at $(18,201)$.

Figure~\ref{fig1} gives a visual representation of this construction for the two Poisson points considered above. \hfill $\triangle$
\end{example}

\begin{figure}
    \centering

\begin{tikzpicture}
\begin{axis}[
width = 10cm,
height = 6cm,
title = {Multi-path Stacked Coupling Illustration},
xmin = 0, 
xmax= 3,
ymin = 0,
xlabel = {Time},
ylabel = {Stacked intensities},
axis x line= bottom,
axis y line= left,
ticklabel style={/tikz/font=\small},
legend style ={
at = {(1.02,1)},
anchor = north west,
font = \small
}
]

\addplot[const plot, dashed, thick, green!60!black]
    coordinates {
        (0,7600) (1,7600)
        (1,6120) (2,6120)
        (2,6120) (3,6120)
    };

\addplot[const plot, dashed, thick, red]
    coordinates {
        (0,9500) (1,9500)
        (1,7650) (2,7650)
        (2,7650) (3,7650)
    };

\addplot[const plot, dashed, thick, color=purple]
    coordinates {
        (0,11400) (1,11400)
        (1,9180) (2,9180)
        (2,11400) (3,11400)
    };

\addplot[const plot, thick, color=red!40]
    coordinates {
        (0,11500) (1,11500)
        (1,9280) (2,9280)
        (2,11500) (3,11500)
    };

\addplot[const plot, dashed, thick, color=cyan]
    coordinates {
        (0,17400) (1,17400)
        (1,15210) (2,15210)
        (2,17400) (3,17400)
    };

\addplot[const plot, dashed, thick, color=brown]
    coordinates {
        (0,15400) (1,15400)
        (1,13200) (2,13200)
        (2,15420) (3,15420)
    };

\addplot[const plot, dashed, thick, red!20!black]
    coordinates {
        (0,16400) (1,16400)
        (1,14205) (2,14205)
        (2,16425) (3,16425)
    };

\addplot[const plot, thick, color=cyan!40]
    coordinates {
        (0,17500) (1,17500)
        (1,15310) (2,15310)
        (2,17500) (3,17500)
    };

\addplot[only marks, mark=x, mark size=3pt, thick]
    coordinates {(1,3000) (2,14800)};

\legend{
    $\lambda_1^{\theta_0}(t)$,
    $\lambda_1^{\theta_1}(t)$,
    $\lambda_1^{\theta_2}(t)$,
    $\bar\lambda_1(t)$,
    $\bar\lambda_1(t)+\lambda_2^{\theta_2}(t)$,
    $\bar\lambda_1(t)+\lambda_2^{\theta_0}(t)$,
    $\bar\lambda_1(t)+\lambda_2^{\theta_1}(t)$,
    $\bar\lambda_1(t)+\bar\lambda_2(t)$
}
\end{axis}
\end{tikzpicture}
    \caption{Illustration of the multi-path stacked coupling for the three processes $X^{\theta_0}$, $X^{\theta_1}$, and $X^{\theta_2}$. The lower region corresponds to reaction channel $1$, with upper boundary $\bar\lambda_1(t)$. The upper region corresponds to reaction channel $2$, with upper boundary $\bar\lambda_1(t)+\bar\lambda_2(t)$. Dashed lines within each region show the acceptance boundaries for the individual processes. The Poisson point at $(1,3000)$ is accepted by all three processes as a reaction-1 event, while the Poisson point at $(2,14800)$ is accepted only by $X^{\theta_2}$ as a reaction-2 event. The two solid envelope boundaries are shifted upward by $100$ units for visibility.}
    \label{fig1}
\end{figure}

\subsection{Theoretical properties of MSC}
  \label{sec:MSC_theory}
  
  In this subsection, we establish the main theoretical properties of MSC.
  There are two main points.  First, each relevant pairwise marginal of the MSC construction has the same law as the corresponding split-coupled pair.  This allows us to transfer existing variance bounds for split coupling directly to MSC.  Second, we give a finite-state refinement of this variance theory: using a perturbation expansion for the $Q$-matrix of a coupled process, we obtain leading-order expansions for the mean and second moment of finite-difference numerators.  This second result is new and provides sharper asymptotic information than the order bounds alone.  Such information can be useful for understanding leading variance constants, choosing perturbation sizes, comparing coupling methods beyond order estimates, and guiding more refined Monte Carlo implementations.

Throughout, the subscript $\mathrm{MSC}$ denotes processes generated by the multi-path stacked coupling, while the subscript $\mathrm{SC}$ denotes processes generated by the split coupling. We also write $a\wedge b=\min\{a,b\}$.

The split coupling, also known as the coupled finite-difference method, is a standard coupling for a pair of processes with parameter vectors $\theta_0$ and $\theta_j$. The resulting pair is itself a continuous-time Markov chain on the product state space: for each reaction channel, the two coordinates share jumps at the minimum of their two intensities, while the remaining excess intensity drives jumps only in the coordinate with the larger intensity. For fixed $j\in \{1,\dots,M\}$, define
\begin{equation*}
m_\ell(s)=
\lambda_\ell^{\theta_0}\left(X_{\mathrm{SC}}^{\theta_0}(s)\right)
\wedge
\lambda_\ell^{\theta_j}\left(X_{\mathrm{SC}}^{\theta_j}(s)\right).
\end{equation*}
Then the split-coupled pair can be represented as
\begin{align}
\label{split coupling}
\begin{split}
X_{\mathrm{SC}}^{\theta_j}(t)
&=
X_{\mathrm{SC}}^{\theta_j}(0)
+
\sum_{\ell=1}^R
Y_{\ell,1}\left(\int_0^t m_\ell(s)\,ds\right)\zeta_\ell \\
&\quad+
\sum_{\ell=1}^R
Y_{\ell,2}\left(
\int_0^t
\bigl(
\lambda_\ell^{\theta_j}\left(X_{\mathrm{SC}}^{\theta_j}(s)\right)-m_\ell(s)
\bigr)\,ds
\right)\zeta_\ell,\\
X_{\mathrm{SC}}^{\theta_0}(t)
&=
X_{\mathrm{SC}}^{\theta_0}(0)
+
\sum_{\ell=1}^R
Y_{\ell,1}\left(\int_0^t m_\ell(s)\,ds\right)\zeta_\ell \\
&\quad+
\sum_{\ell=1}^R
Y_{\ell,3}\left(
\int_0^t
\bigl(
\lambda_\ell^{\theta_0}\left(X_{\mathrm{SC}}^{\theta_0}(s)\right)-m_\ell(s)
\bigr)\,ds
\right)\zeta_\ell,
\end{split}
\end{align}
where the $Y_{\ell,i}$ are independent unit-rate Poisson processes; see \cite{anderson2012efficient}.

The following theorem shows that each nominal/perturbed marginal pair under MSC has exactly the same distribution as the corresponding split-coupled pair. Although we state the result for the nominal process and one perturbed process, the proof only uses the two projected coordinates. Thus, after relabeling, the same conclusion applies to any two coordinates of an MSC construction provided the corresponding split-coupled pair and MSC marginal pair are nonexplosive.

\begin{theorem}
\label{equal theorem111}
Fix $j\in\{1,\dots,M\}$. Let
$\left(X_{\mathrm{SC}}^{\theta_0},X_{\mathrm{SC}}^{\theta_j}\right)$
be the split-coupled pair generated by \eqref{split coupling}, and let
$\left(X_{\mathrm{MSC}}^{\theta_0},X_{\mathrm{MSC}}^{\theta_1},\dots,X_{\mathrm{MSC}}^{\theta_M}\right)$
be the multi-path process generated by the MSC representation \eqref{representation 123123}. Suppose that the split-coupled pair and the MSC marginal pair
$\left(X_{\mathrm{MSC}}^{\theta_0},X_{\mathrm{MSC}}^{\theta_j}\right)$
are nonexplosive and have the same initial distribution. Then
\begin{equation*}
\left(X_{\mathrm{SC}}^{\theta_0},X_{\mathrm{SC}}^{\theta_j}\right)
\overset{d}{=}
\left(X_{\mathrm{MSC}}^{\theta_0},X_{\mathrm{MSC}}^{\theta_j}\right).
\end{equation*}
\end{theorem}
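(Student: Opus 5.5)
The plan is to show that both pairs are continuous-time Markov chains on the product state space $\mathbb{Z}_{\geq0}^d\times\mathbb{Z}_{\geq0}^d$ with the same $Q$-matrix. Nonexplosivity then gives uniqueness of the law of the minimal chain, so equal generators plus equal initial distributions yield equality in distribution on path space.

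\textbf{The split-coupled pair.} By \eqref{split coupling}, the pair $(X_{\mathrm{SC}}^{\theta_0},X_{\mathrm{SC}}^{\theta_j})$ is a Markov chain. From state $(x,y)$ it has three kinds of transitions for each channel $\ell$:
\begin{itemize}
\item to $(x+\zeta_\ell,y+\zeta_\ell)$ at rate $\lambda_\ell^{\theta_0}(x)\wedge\lambda_\ell^{\theta_j}(y)$;
\item to $(x,y+\zeta_\ell)$ at rate $\lambda_\ell^{\theta_j}(y)-\lambda_\ell^{\theta_0}(x)\wedge\lambda_\ell^{\theta_j}(y)$;
\item to $(x+\zeta_\ell,y)$ at rate $\lambda_\ell^{\theta_0}(x)-\lambda_\ell^{\theta_0}(x)\wedge\lambda_\ell^{\theta_j}(y)$.
\end{itemize}
This is the standard description from \cite{anderson2012efficient}. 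Where several $\ell$ share the same target state (e.g.\ equal $\zeta_\ell$), rates are summed, and I will do the same on the MSC side.

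\textbf{The MSC marginal pair.} The full vector $(X_{\mathrm{MSC}}^{\theta_0},\dots,X_{\mathrm{MSC}}^{\theta_M})$ is Markov because, between Poisson points, it is driven by a unit-rate Poisson point process in the predictable strips defined by $\bar q_\ell(s-)$. By the strong Markov/independent-increments property of $\mathcal N$ (or directly from Algorithm~\ref{alg:multi-path-stacked-coupling}), its jump rates are areas of regions.

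Fix a full state $z=(z_0,\dots,z_M)$ with $z_0=x$, $z_j=y$. In channel $\ell$, the strip starts at $b=\bar q_{\ell-1}(z)$. Coordinate $0$ accepts points in $[b,b+\lambda_\ell^{\theta_0}(x))$ and coordinate $j$ accepts points in $[b,b+\lambda_\ell^{\theta_j}(y))$. Both intervals have the same left endpoint, so:
\begin{itemize}
\item the joint-jump region $(x+\zeta_\ell,y+\zeta_\ell)$ has height $\lambda_\ell^{\theta_0}(x)\wedge\lambda_\ell^{\theta_j}(y)$;
\item each solo region has height equal to the positive difference;
\item different channels occupy disjoint strips, so a point cannot move the two coordinates by different $\zeta$'s.
\end{itemize}
Points that move other coordinates but leave coordinates $0$ and $j$ unchanged do not change the pair. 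The resulting rates for the projected pair depend only on $(x,y)$ and not on the other coordinates: the offset $b$ cancels. They coincide exactly with the split-coupling rates.

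\textbf{Obstacle: the projection is Markov.} A function of a Markov chain need not be Markov, so this step needs care. The cleanest route is a martingale-problem argument. For bounded $f$ on the product space, $f(X_{\mathrm{MSC}}^{\theta_0}(t),X_{\mathrm{MSC}}^{\theta_j}(t))-\int_0^t \mathcal{A}f(X_{\mathrm{MSC}}^{\theta_0}(s),X_{\mathrm{MSC}}^{\theta_j}(s))\,ds$ is a martingale with respect to the full filtration. Here $\mathcal A$ is the split-coupling generator, and the identity follows from the compensator of $\mathcal N$ and the rate computation above; localize by stopping at exits from finite sets. Since the split-coupled chain is nonexplosive, its martingale problem is well posed (Ethier--Kurtz uniqueness for nonexplosive minimal chains with bounded-on-compacts rates). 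Any nonexplosive solution with the given initial law therefore has the split-coupling law. Nonexplosivity of the MSC pair is used so the localizing stopping times tend to infinity.

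Alternatively, one can argue via embedded jump chains and holding times conditionally on the full state, which is more elementary but bookkeeping-heavy. I would use the martingale-problem route.
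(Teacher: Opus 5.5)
Your argument is correct. Its core is the rate computation in the reaction-$\ell$ strip: the shared left endpoint $\bar q_{\ell-1}$ makes the joint region have height $\lambda_\ell^{\theta_0}(x_0)\wedge\lambda_\ell^{\theta_j}(x_j)$ and the two solo regions have the positive differences as heights, while disjoint strips rule out mismatched channels. That computation is exactly the paper's.

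The routes differ in how these rates become equality in law.
\begin{itemize}
\item The paper notes that the projected rates depend only on $(x_0,x_j)$ and concludes directly that the projected pair is a continuous-time Markov chain with generator $\mathcal L_j$. This is implicitly a Dynkin-type lumpability criterion, which the paper asserts rather than proves. It then applies the uniqueness theorem for nonexplosive chains with a given $Q$-matrix (Theorem 2.8.4 of Norris).
\item You point out that a function of a Markov chain need not be Markov, and you bypass the question. Compensating $\mathcal N$ by Lebesgue measure shows that the pair solves the localized martingale problem for $\mathcal L_j$ relative to the full MSC filtration. Well-posedness for nonexplosive chains then gives the Markov property and the law at once.
\end{itemize}

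Your route is more self-contained exactly at the step the paper handles in one sentence, at the cost of heavier machinery. The paper's route is shorter once lumpability is granted.

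If you write yours out, state one filtration point explicitly. Your localizing times should be exit times of the \emph{pair} from finite sets, so they are stopping times of the pair's own filtration. The stopped processes are then true martingales in the larger MSC filtration, because the relevant rates are bounded on finite sets. By the tower property they remain martingales in the pair's natural filtration, which is the setting the uniqueness theorem requires.
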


\begin{proof}
It suffices to show that the MSC marginal pair is a continuous-time Markov chain with the same generator as the split-coupled pair. Indeed, since the two pair processes have the same initial distribution and are nonexplosive, equality of their generators implies equality in law by the standard uniqueness theorem for nonexplosive continuous-time Markov chains on countable state spaces; see, for example, Theorem 2.8.4 of \cite{norris1998markov}.

Let
\begin{equation*}
z=(x_0,x_j)\in \mathbb Z_{\geq 0}^d\times \mathbb Z_{\geq 0}^d,
\end{equation*}
and define
\begin{equation*}
\lambda_{\ell,j}^{\min}(z)=
\lambda_\ell^{\theta_0}(x_0)
\wedge
\lambda_\ell^{\theta_j}(x_j).
\end{equation*}
For any bounded function $h:\mathbb Z_{\geq 0}^d\times\mathbb Z_{\geq 0}^d\to\mathbb R$, the split-coupled pair has generator
\begin{align*}
\mathcal L_j h(z)
=
\sum_{\ell=1}^R
\Bigl\{
&
\lambda_{\ell,j}^{\min}(z)
\left(
h\left(z+(\zeta_\ell,\zeta_\ell)\right)-h(z)
\right)
\\
&+
\left(
\lambda_\ell^{\theta_0}(x_0)-\lambda_{\ell,j}^{\min}(z)
\right)
\left(
h\left(z+(\zeta_\ell,0)\right)-h(z)
\right)
\\
&+
\left(
\lambda_\ell^{\theta_j}(x_j)-\lambda_{\ell,j}^{\min}(z)
\right)
\left(
h\left(z+(0,\zeta_\ell)\right)-h(z)
\right)
\Bigr\}.
\end{align*}

We now consider the MSC process and project onto the coordinates $(0,j)$. Fix a full MSC state
\begin{equation*}
z^*=(x_0,x_1,\dots,x_M),
\end{equation*}
and write $z=(x_0,x_j)$ for its projection onto coordinates $0$ and $j$. In the reaction-$\ell$ block of the stacked Poisson representation, the two coordinates $X_{\mathrm{MSC}}^{\theta_0}$ and $X_{\mathrm{MSC}}^{\theta_j}$ both jump through reaction $\ell$ on a subinterval of height
\begin{equation*}
\lambda_\ell^{\theta_0}(x_0)\wedge\lambda_\ell^{\theta_j}(x_j)
=
\lambda_{\ell,j}^{\min}(z).
\end{equation*}
Only $X_{\mathrm{MSC}}^{\theta_0}$ jumps on a subinterval of height
\begin{equation*}
\lambda_\ell^{\theta_0}(x_0)-\lambda_{\ell,j}^{\min}(z),
\end{equation*}
and only $X_{\mathrm{MSC}}^{\theta_j}$ jumps on a subinterval of height
\begin{equation*}
\lambda_\ell^{\theta_j}(x_j)-\lambda_{\ell,j}^{\min}(z).
\end{equation*}
All remaining points in the reaction-$\ell$ block affect neither coordinate of the projected pair and therefore contribute no transition to the projected process.

Thus, after projection onto the coordinates $(0,j)$, every nontrivial transition rate depends only on the projected state $z=(x_0,x_j)$, not on the remaining coordinates of $z^*$. Events that change only the other coordinates are invisible to the projection. Hence the projected pair
$\left(X_{\mathrm{MSC}}^{\theta_0},X_{\mathrm{MSC}}^{\theta_j}\right)$
is itself a continuous-time Markov chain, with infinitesimal generator $\mathcal L_j$, the same generator as that of the split-coupled pair.

Therefore the MSC marginal pair and the split-coupled pair have the same generator, and the equality in law follows from the uniqueness result cited at the start of the proof.
\end{proof}

We next record the variance estimate that will be used repeatedly below.  This is essentially Theorem 3.1 of \cite{anderson2012efficient}, applied to the pairwise marginals of MSC.  In particular, we assume that the hypotheses of that theorem hold for the parameter perturbations considered here: the intensity functions and drift are uniformly Lipschitz on the relevant state space, the change in the intensity functions and drift is $O(\varepsilon)$ under an $O(\varepsilon)$ parameter perturbation, and $f$ is Lipschitz on the relevant state space.

\begin{corollary}
\label{cor:msc-pairwise-variance}
Suppose there is a constant $C_0>0$ such that, for each sufficiently small $\varepsilon>0$, the parameter vectors $\alpha_\varepsilon,\beta_\varepsilon\in\mathbb R^K$ satisfy
\begin{equation*}
\|\alpha_\varepsilon-\beta_\varepsilon\|\leq C_0\varepsilon.
\end{equation*}
Suppose that the hypotheses of Theorem~\ref{equal theorem111} hold for the corresponding split-coupled pair and MSC marginal pair, and that the hypotheses of Theorem 3.1 of \cite{anderson2012efficient} hold for the split-coupled pair, with constants that can be chosen independently of $\varepsilon$.
 If $X_{\mathrm{MSC}}^{\alpha_\varepsilon}$ and $X_{\mathrm{MSC}}^{\beta_\varepsilon}$ are two coordinates of an MSC construction, then for each fixed $T>0$ there is a constant $C_T>0$ such that
\begin{equation*}
\E \left[ \sup_{t\leq T}
\left(
f\left(X_{\mathrm{MSC}}^{\alpha_\varepsilon}(t)\right)
-
f\left(X_{\mathrm{MSC}}^{\beta_\varepsilon}(t)\right)
\right)^2\right]
\leq C_T\varepsilon.
\end{equation*}
In particular,
\begin{equation*}
\operatorname{Var}\left(
f\left(X_{\mathrm{MSC}}^{\alpha_\varepsilon}(T)\right)
-
f\left(X_{\mathrm{MSC}}^{\beta_\varepsilon}(T)\right)
\right)
=
O(\varepsilon).
\end{equation*}
\end{corollary}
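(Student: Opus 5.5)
The plan is to reduce the statement to the known split-coupling estimate by passing through the equality in law from Theorem~\ref{equal theorem111}. First, I fix $\varepsilon$ and relabel the coordinates of the MSC construction so that $X_{\mathrm{MSC}}^{\alpha_\varepsilon}$ and $X_{\mathrm{MSC}}^{\beta_\varepsilon}$ play the roles of the nominal and perturbed coordinates $X_{\mathrm{MSC}}^{\theta_0}$ and $X_{\mathrm{MSC}}^{\theta_j}$. As remarked before that theorem, its proof uses only the two projected coordinates, so this relabeling is legitimate. Then Theorem~\ref{equal theorem111} gives
\[
\left(X_{\mathrm{MSC}}^{\alpha_\varepsilon},X_{\mathrm{MSC}}^{\beta_\varepsilon}\right)\overset{d}{=}\left(X_{\mathrm{SC}}^{\alpha_\varepsilon},X_{\mathrm{SC}}^{\beta_\varepsilon}\right)
\]
as processes, meaning equality of laws on path space (the Skorokhod space of c\`adl\`ag paths).

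Second, I observe that the quantity $\sup_{t\le T}\bigl(f(x(t))-f(y(t))\bigr)^2$ is a measurable, nonnegative functional of the pair of paths $(x,y)$ on $[0,T]$. For piecewise-constant jump paths the supremum can be taken over rational times together with $T$, which settles measurability. Consequently its expectation is the same under the MSC pair and under the SC pair; this holds even without integrability assumptions, since the functional is nonnegative.

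Third, I invoke Theorem 3.1 of \cite{anderson2012efficient} for the split-coupled pair. Its hypotheses are the Lipschitz intensities, drift and $f$, and the $O(\varepsilon)$ change of intensities under a perturbation with $\|\alpha_\varepsilon-\beta_\varepsilon\|\le C_0\varepsilon$. That theorem yields $\E\sup_{t\le T}|X_{\mathrm{SC}}^{\alpha_\varepsilon}(t)-X_{\mathrm{SC}}^{\beta_\varepsilon}(t)|^2\le C\varepsilon$, or directly the corresponding bound for $f$ after using its Lipschitz property, with a constant uniform in $\varepsilon$. This is where uniformity of the constants is needed. Transferring this bound via the second step gives the first claim. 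The variance statement then follows from $\operatorname{Var}(Z)\le \E Z^2$ with $Z=f(X_{\mathrm{MSC}}^{\alpha_\varepsilon}(T))-f(X_{\mathrm{MSC}}^{\beta_\varepsilon}(T))$, bounded by the supremum at $t=T$.

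The only delicate point I anticipate is the second step. There I must make sure that Theorem~\ref{equal theorem111} gives equality of process laws, not merely of fixed-time marginals. That is so, since it identifies two nonexplosive Markov chains with the same generator and initial law, and by Norris's uniqueness theorem their finite-dimensional distributions, hence path laws, coincide. I also need to check that the sup functional is determined by those laws, which holds by the c\`adl\`ag/rational-time argument above.
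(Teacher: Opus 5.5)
Your proposal is correct and follows the paper's proof. You relabel the two coordinates, use Theorem~\ref{equal theorem111} to identify the MSC pair in law with the split-coupled pair, and then apply Theorem 3.1 of \cite{anderson2012efficient}; your extra checks (equality of laws on path space, measurability of the supremum via right-continuity, and $\operatorname{Var}(Z)\le\E Z^2$) fill in details the paper's short proof leaves implicit.
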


\begin{proof}
By Theorem~\ref{equal theorem111}, after relabeling the two coordinates if necessary, the pair
$\left(X_{\mathrm{MSC}}^{\alpha_\varepsilon},X_{\mathrm{MSC}}^{\beta_\varepsilon}\right)$
has the same law as the corresponding split-coupled pair.  The result then follows directly from Theorem 3.1 of \cite{anderson2012efficient}, applied with perturbation size of order $\varepsilon$.
\end{proof}

We next state the version needed for wider stencils and higher-order derivative formulas.

\begin{corollary}
\label{cor:msc-contrast-variance}
Fix an integer $J \geq 2$ and coefficients $c_1,\dots,c_J$ satisfying
\begin{equation*}
\sum_{r=1}^J c_r=0.
\end{equation*}
For each sufficiently small $\varepsilon>0$, let
$\theta_1^\varepsilon,\dots,\theta_J^\varepsilon\in\mathbb R^K$
satisfy
\begin{equation*}
\max_{1\leq r,s\leq J}
\left\|\theta_r^\varepsilon-\theta_s^\varepsilon\right\|
\leq C_0\varepsilon
\end{equation*}
for some constant $C_0>0$, and suppose the hypotheses of Corollary~\ref{cor:msc-pairwise-variance} hold uniformly for all pairs of parameter vectors in this finite collection.  Let the processes
$X_{\mathrm{MSC}}^{\theta_1^\varepsilon},\dots,X_{\mathrm{MSC}}^{\theta_J^\varepsilon}$
be generated by MSC, and define the finite-difference numerator
\begin{equation*}
N_\varepsilon(T)
=
\sum_{r=1}^J c_r f\left(X_{\mathrm{MSC}}^{\theta_r^\varepsilon}(T)\right).
\end{equation*}
Then, for each fixed $T>0$,
\begin{equation*}
\E\left[N_\varepsilon(T)^2\right]=O(\varepsilon),
\qquad
\operatorname{Var}\left(N_\varepsilon(T)\right)=O(\varepsilon).
\end{equation*}
\end{corollary}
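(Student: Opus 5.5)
The plan is to reduce the contrast to a sum of pairwise differences and then apply Corollary~\ref{cor:msc-pairwise-variance} to each one. Because $\sum_{r=1}^J c_r=0$, I can subtract $\bigl(\sum_r c_r\bigr) f\bigl(X_{\mathrm{MSC}}^{\theta_1^\varepsilon}(T)\bigr)=0$ and rewrite
\begin{equation*}
N_\varepsilon(T)=\sum_{r=2}^J c_r\left(f\left(X_{\mathrm{MSC}}^{\theta_r^\varepsilon}(T)\right)-f\left(X_{\mathrm{MSC}}^{\theta_1^\varepsilon}(T)\right)\right)=\sum_{r=2}^J c_r\,\Delta_r^\varepsilon .
\end{equation*}
Here $\theta_1^\varepsilon$ serves as a reference coordinate; any fixed index would work equally well. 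Each $\Delta_r^\varepsilon$ is a difference of $f$ evaluated at two coordinates of one MSC construction. Their parameter vectors satisfy $\|\theta_r^\varepsilon-\theta_1^\varepsilon\|\leq C_0\varepsilon$.

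Next I apply the elementary inequality $\bigl(\sum_{r=2}^J a_r\bigr)^2\leq (J-1)\sum_{r=2}^J a_r^2$ (Cauchy--Schwarz), which gives
\begin{equation*}
\E\left[N_\varepsilon(T)^2\right]\leq (J-1)\sum_{r=2}^J c_r^2\,\E\left[(\Delta_r^\varepsilon)^2\right].
\end{equation*}
For each $r$, the pair $\bigl(X_{\mathrm{MSC}}^{\theta_r^\varepsilon},X_{\mathrm{MSC}}^{\theta_1^\varepsilon}\bigr)$ consists of two coordinates of the MSC construction. By hypothesis, the assumptions of Corollary~\ref{cor:msc-pairwise-variance} hold for it uniformly. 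That corollary bounds the supremum over $t\leq T$, so in particular it bounds the value at $t=T$: $\E[(\Delta_r^\varepsilon)^2]\leq C_T^{(r)}\varepsilon$. Summing over $r$ yields $\E[N_\varepsilon(T)^2]\leq C\varepsilon$, with $C=(J-1)\sum_r c_r^2 C_T^{(r)}$ independent of $\varepsilon$. The variance bound then follows from $\operatorname{Var}(N_\varepsilon(T))\leq \E[N_\varepsilon(T)^2]$.

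The only delicate point is that the constants must be uniform in $\varepsilon$. I also need Corollary~\ref{cor:msc-pairwise-variance} to apply to the pair $(\theta_r^\varepsilon,\theta_1^\varepsilon)$ in an MSC construction whose envelope is taken over all $J$ parameter vectors, not just the two in the pair. The first issue is covered by the assumption that the hypotheses hold uniformly over the finite collection, and there are only finitely many pairs. The second is exactly the relabeling remark following Theorem~\ref{equal theorem111}. The projected pair's law depends only on its two coordinates, so the presence of other coordinates in the stacked envelope is irrelevant. Beyond verifying these two points, the argument is routine.
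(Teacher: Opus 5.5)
Your proposal is correct and follows essentially the same argument as the paper. You rewrite $N_\varepsilon(T)$ as differences against the reference coordinate $\theta_1^\varepsilon$, apply Corollary~\ref{cor:msc-pairwise-variance} to each pair, combine via $(\sum_{r=2}^J u_r)^2\le (J-1)\sum_{r=2}^J u_r^2$, and bound the variance by the second moment, all as the paper does. Your remarks on $\varepsilon$-uniform constants and on the stacked envelope over all $J$ coordinates not affecting each pair (via the relabeling remark after Theorem~\ref{equal theorem111}) state explicitly what the paper leaves implicit.
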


\begin{proof}
Since $\sum_{r=1}^J c_r=0$, we may rewrite $N_\varepsilon(T)$ using the first process as a reference:
\begin{align*}
N_\varepsilon(T)
&=
\sum_{r=2}^J
c_r
\left(
f\left(X_{\mathrm{MSC}}^{\theta_r^\varepsilon}(T)\right)
-
f\left(X_{\mathrm{MSC}}^{\theta_1^\varepsilon}(T)\right)
\right).
\end{align*}
For each $r$, the parameter vectors $\theta_r^\varepsilon$ and $\theta_1^\varepsilon$ differ by $O(\varepsilon)$.  Corollary~\ref{cor:msc-pairwise-variance} therefore gives
\begin{equation*}
\E\left[\left(
f\left(X_{\mathrm{MSC}}^{\theta_r^\varepsilon}(T)\right)
-
f\left(X_{\mathrm{MSC}}^{\theta_1^\varepsilon}(T)\right)
\right)^2\right]
=
O(\varepsilon).
\end{equation*}
Since there are only finitely many terms in the sum, the bound
\begin{equation*}
\left(\sum_{r=2}^J u_r\right)^2
\leq
(J-1)\sum_{r=2}^J u_r^2
\end{equation*}
implies $\E\left[ N_\varepsilon(T)^2\right] =O(\varepsilon)$.  The variance bound follows immediately from
\begin{equation*}
\operatorname{Var}\left(N_\varepsilon(T)\right)
\leq
\E\left[N_\varepsilon(T)^2\right].
\end{equation*}
\end{proof}

The preceding corollaries provide the variance-order estimates needed for the later bias-variance analysis. In finite-state settings, a $Q$-matrix perturbation argument gives sharper information: the mean and second moment of a coupled finite-difference numerator admit first-order asymptotic expansions.
We record this as a finite-state refinement, which can be useful for understanding the leading constants underlying the order bounds, choosing perturbation sizes, and comparing coupling methods beyond order estimates.

\begin{theorem}
\label{thm:finite-state-generator-expansion}
Let  $\{Z^\varepsilon\}_{\varepsilon \ge 0 }$ be a family of continuous-time Markov chains on a finite state space $E$, 
with common initial distribution $\mu_0$ and respective $Q$-matrices $Q^\varepsilon$.
Suppose there exists a matrix $G$ such that
\begin{equation*}
Q^\varepsilon = Q^0+\varepsilon G+R^\varepsilon,
\qquad \text{where}\qquad  
|R^\varepsilon|=o(\varepsilon),
\end{equation*}
as $\varepsilon \downarrow 0$,
where $|\cdot|$ denotes any matrix norm. Suppose also that there is a subset $\Delta\subset E$ such that $\mu_0$ is supported on $\Delta$ and $\Delta$ is invariant for the chain with $Q$-matrix $Q^0$. Let $H:E\to\mathbb R$ satisfy $H(x)=0$ for all $x\in\Delta$. Then, for each fixed $T>0$,
\begin{align*}
\E \left[ H\left(Z^\varepsilon(T)\right)\right]
&= a_T\varepsilon+o(\varepsilon),\\
\E\left[H\left(Z^\varepsilon(T)\right)^2\right]
&=
b_T\varepsilon+o(\varepsilon),
\end{align*}
where $a_T\in\mathbb R$ and $b_T\geq 0$ are given by
\begin{align*}
a_T &= \mu_0 \left(
\int_0^T e^{(T-s)Q^0}G e^{sQ^0}\,ds
\right)H,\\
b_T &= \mu_0 \left(\int_0^T e^{(T-s)Q^0}G e^{sQ^0}\,ds \right)H^2,
\end{align*}
and $H^2$ denotes the pointwise square of $H$. Consequently,
\begin{equation*}
\operatorname{Var}\left(H\left(Z^\varepsilon(T)\right)\right) = b_T\varepsilon+o(\varepsilon).
\end{equation*}
\end{theorem}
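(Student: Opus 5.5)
The argument is a first-order perturbation expansion of the matrix exponential. On a finite state space the law at time $T$ is the row vector $\mu_0 e^{TQ^\varepsilon}$, so for any function $F:E\to\mathbb R$ (viewed as a column vector) we have $\E[F(Z^\varepsilon(T))]=\mu_0 e^{TQ^\varepsilon}F$. It therefore suffices to expand $e^{TQ^\varepsilon}$ to first order in $\varepsilon$. We then apply the expansion twice, with $F=H$ and $F=H^2$. Both vanish on $\Delta$, and this is the only property of $H$ that will be used.

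\textbf{Step 1: Duhamel formula.} Write $Q^\varepsilon=Q^0+B^\varepsilon$ with $B^\varepsilon=\varepsilon G+R^\varepsilon$. Differentiating $s\mapsto e^{(T-s)Q^0}e^{sQ^\varepsilon}$ and integrating from $0$ to $T$ gives the exact identity
\begin{equation*}
e^{TQ^\varepsilon}-e^{TQ^0}=\int_0^T e^{(T-s)Q^0}B^\varepsilon e^{sQ^\varepsilon}\,ds.
\end{equation*}
Since $|B^\varepsilon|=O(\varepsilon)$ and $\sup_{s\le T}|e^{sQ^\varepsilon}|$ is bounded uniformly for small $\varepsilon$ (by $e^{T|Q^\varepsilon|}$), this identity first yields $\sup_{s\le T}|e^{sQ^\varepsilon}-e^{sQ^0}|=O(\varepsilon)$. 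Substituting this bound back into the integrand, we replace $e^{sQ^\varepsilon}$ by $e^{sQ^0}$ at cost $O(\varepsilon^2)$, and we replace $B^\varepsilon$ by $\varepsilon G$ at cost $o(\varepsilon)$. This gives
\begin{equation*}
e^{TQ^\varepsilon}=e^{TQ^0}+\varepsilon\int_0^T e^{(T-s)Q^0}Ge^{sQ^0}\,ds+o(\varepsilon).
\end{equation*}
Because the matrices are finite-dimensional, all norms are equivalent and the $o(\varepsilon)$ term is uniform.

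\textbf{Step 2: The zeroth-order term vanishes.} Since $\mu_0$ is supported on $\Delta$ and $\Delta$ is invariant for $Q^0$, the distribution $\mu_0e^{TQ^0}$ is also supported on $\Delta$. Hence $\mu_0e^{TQ^0}F=0$ for $F\in\{H,H^2\}$. Multiplying the expansion of Step 1 on the left by $\mu_0$ and on the right by $F$ then gives the stated formulas for $a_T$ and $b_T$.

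\textbf{Step 3: Sign of $b_T$ and the variance.} We have $b_T\ge0$ because $b_T=\lim_{\varepsilon\downarrow0}\varepsilon^{-1}\E[H(Z^\varepsilon(T))^2]$ and each of these quantities is nonnegative. One might worry that negative perturbations are allowed, but since $\varepsilon\ge0$ the limit is one-sided and the conclusion holds. For the variance, write
\begin{equation*}
\operatorname{Var}\bigl(H(Z^\varepsilon(T))\bigr)=\E[H(Z^\varepsilon(T))^2]-\bigl(\E[H(Z^\varepsilon(T))]\bigr)^2=b_T\varepsilon+o(\varepsilon)-\bigl(a_T\varepsilon+o(\varepsilon)\bigr)^2,
\end{equation*}
and the squared mean is $O(\varepsilon^2)$, so the variance is $b_T\varepsilon+o(\varepsilon)$.

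The main point needing care is Step 1. One must control the remainder uniformly in $s\in[0,T]$, which requires the bootstrap bound $e^{sQ^\varepsilon}-e^{sQ^0}=O(\varepsilon)$ before substituting. Finite dimensionality makes this routine. The invariance hypothesis in Step 2 is what removes the $O(1)$ term; without it, the expansion would not start at order $\varepsilon$.
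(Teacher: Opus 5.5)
Your proposal is correct and follows the same route as the paper. Both arguments write $\E[F(Z^\varepsilon(T))]=\mu_0e^{TQ^\varepsilon}F$, expand $e^{TQ^\varepsilon}$ to first order, use invariance of $\Delta$ to kill the zeroth-order term for $F=H$ and $F=H^2$, and absorb the squared mean into $o(\varepsilon)$. The only differences are additions on your side: you derive the first-order expansion via the Duhamel identity with a bootstrap, where the paper simply cites it as standard, and you justify $b_T\ge 0$ as a one-sided limit of nonnegative quantities, a point the paper's proof leaves implicit.
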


\begin{proof}
Since $E$ is finite, the transition matrix of $Z^\varepsilon$ at time $T$ is
\begin{equation*}
P_T^\varepsilon=e^{TQ^\varepsilon}.
\end{equation*}
The standard first-order expansion of the matrix exponential gives
\begin{equation*}
e^{TQ^\varepsilon}
= e^{TQ^0} + \varepsilon
\int_0^T e^{(T-s)Q^0}G e^{sQ^0}\,ds + o(\varepsilon).
\end{equation*}
Identifying functions on $E$ with column vectors and $\mu_0$ with a row vector, for any function $\varphi:E\to\mathbb R$,
\begin{equation*}
\E\left[ \varphi\left(Z^\varepsilon(T)\right)\right]
= \mu_0 e^{TQ^\varepsilon}\varphi
=
\mu_0 e^{TQ^0}\varphi
+ \varepsilon
\mu_0
\left(
\int_0^T e^{(T-s)Q^0}G e^{sQ^0}\,ds
\right)\varphi + o(\varepsilon).
\end{equation*}
Because $\mu_0$ is supported on $\Delta$, $\Delta$ is invariant for the chain with $Q$-matrix $Q^0$, and both $H$ and $H^2$ vanish on $\Delta$, we have
\begin{equation*}
\mu_0 e^{TQ^0}H=0,
\qquad \text{and}\qquad
\mu_0 e^{TQ^0}H^2=0.
\end{equation*}
Applying the preceding expansion with $\varphi=H$ and then with $\varphi=H^2$ gives the stated formulas for $a_T$ and $b_T$. Finally,
\begin{equation*}
\left(\E\left[H\left(Z^\varepsilon(T)\right)\right]\right)^2=O(\varepsilon^2),
\end{equation*}
and hence
\begin{align*}
\operatorname{Var}\left(H\left(Z^\varepsilon(T)\right)\right)
&= \E\left[H\left(Z^\varepsilon(T)\right)^2\right] - \left(\E\left[H\left(Z^\varepsilon(T)\right)\right]\right)^2\\
&= b_T\varepsilon+o(\varepsilon).
\end{align*}
\end{proof}
Theorem~\ref{thm:finite-state-generator-expansion} applies directly to finite-difference estimators generated by MSC. For example, for some $j\in\{1,\dots,K\}$, let
\begin{equation*}
Z^\varepsilon
=
\left(
X^{\theta+a_1\varepsilon e_j},
\dots,
X^{\theta+a_J \varepsilon e_j}
\right),
\qquad
H(x_1,\dots,x_J)
=
\sum_{r=1}^J c_r f(x_r),
\end{equation*}
where $\sum_{r=1}^J c_r=0$, and suppose each coordinate process has finite state space $S$. If the coordinate processes share the same initial condition, then the initial law of $Z^\varepsilon$ is supported on the diagonal
\begin{equation*}
\Delta=\{(x,\dots,x):x\in S\}.
\end{equation*}
Moreover, $H$ vanishes on $\Delta$, and when $\varepsilon=0$, all parameter vectors equal $\theta$, so the MSC construction leaves $\Delta$ invariant. Thus, whenever the $Q$-matrix of $Z^\varepsilon$ has the first-order expansion assumed in Theorem~\ref{thm:finite-state-generator-expansion}, the theorem yields leading-order expansions for the mean and second moment of the finite-difference numerator.

\begin{remark}
\label{remark:phosphorylation}
In the subsequent application sections, the processive phosphorylation network~\eqref{djdichfn} has a finite reachable state space by conservation of the total amounts of substrate, kinase, and phosphatase. On this state space, the mass-action intensity functions are Lipschitz, and because the perturbed parameters enter as rate constants, the intensities and drift change by $O(\varepsilon)$ under the finite-difference perturbations used below. Thus the hypotheses needed for the variance-order estimates are satisfied, and the network also lies in the finite-state setting of Theorem~\ref{thm:finite-state-generator-expansion}.
\end{remark}

For comparison, we also record the corresponding finite-state variance limit for independently generated paths, which will be used in the RMSE analysis of the next section.

\begin{theorem}
\label{thm:indep-finite-state-variance}
Let $\{X^\theta:\theta\in\mathbb R^K\}$ be a family of continuous-time Markov chains on a finite state space $E$, with common initial distribution $\mu_0$ and respective $Q$-matrices $Q^\theta$. Fix $\theta\in\mathbb R^K$, an integer $J \geq 2$, coefficients
$c_1,\dots,c_J\in\mathbb R$, and a function $f:E\to\mathbb R$.  For each $r\in\{1,\dots,J\}$, let $\theta_r^\varepsilon$ satisfy
\begin{equation*}
Q^{\theta_r^\varepsilon}\longrightarrow Q^\theta
\qquad\text{as }\varepsilon\downarrow0.
\end{equation*}
Let
$X_{\mathrm{Indep}}^{\theta_1^\varepsilon},\dots,
X_{\mathrm{Indep}}^{\theta_J^\varepsilon}$
be generated independently, and define
\begin{equation*}
N_\varepsilon(T)
=
\sum_{r=1}^J
c_r f\left(X_{\mathrm{Indep}}^{\theta_r^\varepsilon}(T)\right).
\end{equation*}
Then, for each fixed $T>0$,
\begin{equation*}
\operatorname{Var}\left(N_\varepsilon(T)\right)
=
\left(\sum_{r=1}^J c_r^2\right)
\operatorname{Var}\left(f\left(X^\theta(T)\right)\right)
+
o(1).
\end{equation*}
\end{theorem}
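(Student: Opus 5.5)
Since the paths are generated independently, the variance of $N_\varepsilon(T)$ splits across coordinates. The plan is to use this decomposition and then prove continuity in $\varepsilon$ of each single-path variance. By independence,
\begin{equation*}
\operatorname{Var}\left(N_\varepsilon(T)\right)
=
\sum_{r=1}^J c_r^2\,
\operatorname{Var}\left(f\left(X^{\theta_r^\varepsilon}(T)\right)\right),
\end{equation*}
because all covariance terms between distinct coordinates vanish. Here only the marginal law of each $X_{\mathrm{Indep}}^{\theta_r^\varepsilon}$ enters, which is that of $X^{\theta_r^\varepsilon}$. It therefore suffices to show that, for each fixed $r$,
\begin{equation*}
\operatorname{Var}\left(f\left(X^{\theta_r^\varepsilon}(T)\right)\right)
\longrightarrow
\operatorname{Var}\left(f\left(X^{\theta}(T)\right)\right)
\qquad\text{as }\varepsilon\downarrow 0 .
\end{equation*}
Since there are only finitely many terms, the $o(1)$ errors can then be summed.

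For the continuity step I would proceed as in the proof of Theorem~\ref{thm:finite-state-generator-expansion}. On the finite state space $E$, identify $\mu_0$ with a row vector and functions with column vectors. This gives
\begin{equation*}
\E\left[\varphi\left(X^{\theta_r^\varepsilon}(T)\right)\right]
=
\mu_0 e^{TQ^{\theta_r^\varepsilon}}\varphi ,
\end{equation*}
and I would apply it with $\varphi=f$ and $\varphi=f^2$. The matrix exponential is continuous on the finite-dimensional space of matrices. Hence the assumption $Q^{\theta_r^\varepsilon}\to Q^\theta$ gives $e^{TQ^{\theta_r^\varepsilon}}\to e^{TQ^\theta}$. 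Concretely, one can use the bound $|e^{A}-e^{B}|\le |A-B|\,e^{\max(|A|,|B|)}$, obtained from the Duhamel formula $e^{A}-e^{B}=\int_0^1 e^{sA}(A-B)e^{(1-s)B}\,ds$.

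It then follows that the first and second moments of $f(X^{\theta_r^\varepsilon}(T))$ converge to those of $f(X^\theta(T))$. Since the variance is the second moment minus the square of the first moment, it converges as well. Substituting these limits into the decomposition above yields the stated expansion.

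I do not expect a real obstacle. The only points needing care are two. First, the independence must be used through the exact vanishing of all cross-covariances, not merely a bound on them. Second, the convergence must be exact rather than a rate, since the statement claims only $o(1)$ and no differentiability of $\theta\mapsto Q^\theta$ is assumed. Finiteness of $E$ makes $f$ bounded, so all moments exist and no uniform-integrability argument is needed.
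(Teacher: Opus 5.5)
Your proposal is correct and follows the paper's proof essentially step for step. You decompose the variance by independence and then use continuity of the matrix exponential on the finite state space to get convergence of the first and second moments. The explicit Duhamel bound you add is just a concrete justification of the continuity the paper invokes.
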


\begin{proof}
By independence,
\begin{equation*}
\operatorname{Var}\left(N_\varepsilon(T)\right)
=
\sum_{r=1}^J
c_r^2
\operatorname{Var}\left(
f\left(X_{\mathrm{Indep}}^{\theta_r^\varepsilon}(T)\right)
\right).
\end{equation*}
Since $E$ is finite and
$Q^{\theta_r^\varepsilon}\to Q^\theta$, continuity of the matrix exponential gives
\begin{equation*}
e^{TQ^{\theta_r^\varepsilon}}
\longrightarrow
e^{TQ^\theta}.
\end{equation*}
Hence the first and second moments of
$f(X_{\mathrm{Indep}}^{\theta_r^\varepsilon}(T))$
converge to those of $f(X^\theta(T))$, and therefore
\begin{equation*}
\operatorname{Var}\left(
f\left(X_{\mathrm{Indep}}^{\theta_r^\varepsilon}(T)\right)
\right)
=
\operatorname{Var}\left(f\left(X^\theta(T)\right)\right)
+
o(1).
\end{equation*}
Summing over $r$ gives the result.
\end{proof}

\section{General optimal RMSE scaling for finite-difference estimators}
\label{General optimal RMSE scaling for finite-difference estimators}

All finite-difference estimators considered below exhibit the same basic bias-variance tradeoff. They differ only in (i) the order of the finite-difference bias and (ii) the variance scaling of the finite-difference numerator under the chosen coupling method. We therefore carry out the general RMSE optimization once here and summarize the resulting optimal scaling rates for the different coupling methods.

Fix $T>0$, let $f:\mathbb Z_{\geq 0}^d\to\mathbb R$ be an observable of interest, and, for $\theta\in\mathbb R^K$, define
\begin{align*}
     g(\theta) \stackrel{\text { def }}{=} \E\left[f(X^\theta(T))\right].    
\end{align*}

Let $D(\theta)$ denote a parameter derivative of $g$ of total order $k\geq1$, fix an integer $J \geq 2$, and let
$e_1,\dots,e_K$ denote the canonical unit vectors in $\mathbb R^K$. For fixed
coefficients $a_{rj}$, define
\begin{equation*}
\theta_r^\varepsilon
=
\theta+\varepsilon\sum_{j=1}^K a_{rj}e_j,
\qquad r=1,\dots,J.
\end{equation*}
We consider finite-difference approximations of the form
\begin{align}
\label{firferofdi}
D_\varepsilon(\theta)
=
\frac{\displaystyle\sum_{r=1}^J c_r g(\theta_r^\varepsilon)}
{\varepsilon^k},
\end{align}
where $c_1,\dots,c_J$ are the finite-difference coefficients and the
coefficients $a_{rj}$ specify the stencil points. In particular, we assume
\begin{equation*}
\sum_{r=1}^J c_r=0.
\end{equation*}
The corresponding Monte Carlo estimator is
\begin{align}
\label{estimator fpohujt}
\widehat D_\varepsilon(\theta)
&=
\frac{1}{N}\sum_{i=1}^N d_{[i]}(\varepsilon),
&
d_{[i]}(\varepsilon)
&=
\frac{\displaystyle\sum_{r=1}^J
c_r f\left(X_{[i]}^{\theta_r^\varepsilon}(T)\right)}
{\varepsilon^k}.
\end{align}
For each $i$, the processes
$\{X_{[i]}^{\theta_r^\varepsilon}\}_{r=1}^J$
form one collection of coupled sample paths, and the $N$ collections are
independent and identically distributed.

We measure the accuracy of the estimator using its root mean square error (RMSE),
\begin{align*}
\operatorname{RMSE}\left(\widehat D_\varepsilon(\theta)\right)
&=
\left(
\E\left[
\left(
\widehat D_\varepsilon(\theta)-D(\theta)
\right)^2
\right]
\right)^{1/2} \\
&=
\left(
\operatorname{Var}\left(\widehat D_\varepsilon(\theta)\right)
+
\left(
\E\left[\widehat D_\varepsilon(\theta)\right]-D(\theta)
\right)^2
\right)^{1/2}.
\end{align*}
This criterion captures both sources of error relevant here: the bias inherent to the finite-difference estimator and the statistical error due to the variance of the Monte Carlo estimator. In particular, an RMSE below $\eta$ implies that both the absolute bias is below $\eta$ and the variance is below $\eta^2$.

We assume that the finite-difference approximation has bias order $b>0$ with a nonzero leading coefficient; that is,
\begin{align}
\begin{split}
\label{ioffog9rtfq}
D_\varepsilon(\theta)-D(\theta)
&=
B\varepsilon^b+o(\varepsilon^b),\\
\E\left[\widehat D_\varepsilon(\theta)\right]-D(\theta)
&=
B\varepsilon^b+o(\varepsilon^b),
\end{split}
\end{align}
for some constant $B\neq0$, where the second equality follows from
$\E[\widehat D_\varepsilon(\theta)]=D_\varepsilon(\theta)$.

To treat the different coupling methods within a common framework, define the finite-difference numerator
\begin{equation*}
N_{[i]}^\varepsilon
=
\sum_{r=1}^J
c_r f\left(X_{[i]}^{\theta_r^\varepsilon}(T)\right).
\end{equation*}
Suppose that, under the coupling method being considered,
\begin{equation}
\label{eq:numerator-variance-scaling}
\operatorname{Var}\left(N_{[i]}^\varepsilon\right)
=
C_V\varepsilon^q+o\left(\varepsilon^q\right)
\end{equation}
for some $C_V>0$ and $q<2k$. Since the $N$ collections of paths are independent and identically distributed,
\begin{equation*}
\operatorname{Var}\left(\widehat D_\varepsilon(\theta)\right)
=
C_V N^{-1}\varepsilon^{q-2k}
+
o\left(N^{-1}\varepsilon^{q-2k}\right).
\end{equation*}
Moreover, the bias expansion~\eqref{ioffog9rtfq} gives
\begin{equation*}
\left(
\E\left[\widehat D_\varepsilon(\theta)\right]-D(\theta)
\right)^2
=
B^2\varepsilon^{2b}+o\left(\varepsilon^{2b}\right).
\end{equation*}
Therefore,
\begin{align*}
\E\left[
\left(
\widehat D_\varepsilon(\theta)-D(\theta)
\right)^2
\right]
&=
\operatorname{Var}\left(\widehat D_\varepsilon(\theta)\right)
+
\left(
\E\left[\widehat D_\varepsilon(\theta)\right]-D(\theta)
\right)^2\\
&=
C_V N^{-1}\varepsilon^{q-2k}
+
B^2\varepsilon^{2b}
+
o\left(N^{-1}\varepsilon^{q-2k}\right)
+
o\left(\varepsilon^{2b}\right).
\end{align*}
For fixed $N$, the leading-order terms are minimized at
\begin{equation*}
\varepsilon^*
=
\left(
\frac{(2k-q)C_V}{2 b B^2 N}
\right)^{\frac{1}{2b+2k-q}}.
\end{equation*}
It follows that the corresponding optimal mean square error and root mean square error satisfy
\begin{align}
\begin{split}
\label{eq:general-optimal-scaling}
\operatorname{MSE}_{\mathrm{opt}}
&=
O\left(
N^{-\frac{2b}{2b+2k-q}}
\right),\\
\operatorname{RMSE}_{\mathrm{opt}}
&=
O\left(
N^{-\frac{b}{2b+2k-q}}
\right).
\end{split}
\end{align}

It remains to identify the numerator-variance exponent $q$ for each coupling method. In the finite-state setting, under the hypotheses of Theorem~\ref{thm:indep-finite-state-variance}, independently generated paths satisfy
\begin{equation*}
\operatorname{Var}\left(N_{[i]}^\varepsilon\right)
=
C_{\mathrm{ind}}+o(1),
\end{equation*}
where
\begin{equation*}
C_{\mathrm{ind}}
=
\left(\sum_{r=1}^J c_r^2\right)
\operatorname{Var}\left(f\left(X^\theta(T)\right)\right).
\end{equation*}
 Thus $q=0$, and~\eqref{eq:general-optimal-scaling} yields
\begin{equation*}
\operatorname{RMSE}_{\mathrm{opt}}
=
O\left(
N^{-\frac{b}{2(b+k)}}
\right).
\end{equation*}

For MSC, Corollary~\ref{cor:msc-contrast-variance} gives the general variance bound
$O(\varepsilon)$, while Theorem~\ref{thm:finite-state-generator-expansion} gives
\begin{equation*}
\operatorname{Var}\left(N_{[i]}^\varepsilon\right)
=
C_{\mathrm{cpl}}\varepsilon+o(\varepsilon),
\end{equation*}
provided the hypotheses of that theorem hold. Hence $q=1$, and
\begin{equation*}
\operatorname{RMSE}_{\mathrm{opt}}
= O\left(
N^{-\frac{b}{2b+2k-1}}
\right).
\end{equation*}

The resulting optimal RMSE scaling rates are summarized in Table~\ref{table dfgiohrefe}.
\begin{table}[H]
    \centering
    \caption{Optimal RMSE scaling for the estimator~\eqref{estimator fpohujt} under the different coupling methods}
    \label{table dfgiohrefe}
    \renewcommand{\arraystretch}{1.25}
    \begin{tabular}{lccc}
    \toprule
     & \textbf{Independent} & \textbf{SC} & \textbf{MSC}\\
     \midrule
     \textbf{Numerator-variance exponent $q$}
     & $0$ & $1$ & $1$\\
     \textbf{Optimal RMSE}
     & $O\left(N^{-\frac{b}{2(b+k)}}\right)$
     & $O\left(N^{-\frac{b}{2b+2k-1}}\right)$
     & $O\left(N^{-\frac{b}{2b+2k-1}}\right)$\\
    \bottomrule
    \end{tabular}
\end{table}

We do not include CRP in the theoretical scaling comparison because, to our knowledge, there is no corresponding small-perturbation variance theory for CRP. Previous work has shown that split coupling can be obtained, in a precise limiting sense, from CRP by allowing increasingly frequent recoupling \cite{anderson2014asymptotic}, while numerical studies have also shown that CRP can lose effectiveness over time and, over intermediate time scales, may perform little better than independent simulation \cite{anderson2012efficient}. We therefore treat CRP empirically in the numerical comparisons below rather than assigning it a theoretical RMSE scaling rate.

\section{Numerical applications of MSC to a processive phosphorylation network}
\label{sec:numerics}

We begin by introducing the processive phosphorylation/dephosphorylation model that we will use for all of our numerical experiments and then present three applications of the multi-path stacked coupling method: simultaneous computation of many first derivatives, higher-order finite-difference estimation of a first derivative, and higher-order derivative estimation. Appendix~\ref{sec:protocol} gives the details of the numerical experiments, together with application-specific implementation details in Appendices \ref{sec:Application 1 change}, \ref{sec:Application 2 change}, and \ref{sec:Application 3 change}.

\subsection{The processive phosphorylation/dephosphorylation model}
\label{sec:model_example}

Phosphorylation and dephosphorylation are ubiquitous mechanisms in cellular signaling and regulation, and phosphorylation cycles have also been studied as mechanisms for integral regulation and adaptation \cite{fang2017integral,fang2019adaptation}. For our numerical examples, we use the sequential, processive $n$-site phosphorylation/dephosphorylation network studied by Conradi and Shiu \cite{conradi2015global}. Broader classes of processive multisite phosphorylation networks, including variants with irreversible reactions and product inhibition, were subsequently studied by Eithun and Shiu \cite{eithun2017all}. The network used here is
\begin{align}
\label{djdichfn}
\begin{split}
& S_0+K \underset{\alpha_2}{\stackrel{\alpha_1}{\rightleftharpoons}} S_0 K \underset{\alpha_4}{\stackrel{\alpha_3}{\rightleftharpoons}} S_1 K \underset{\alpha_6}{\stackrel{\alpha_5}{\rightleftharpoons}} \ldots \underset{\alpha_{2n}}{\stackrel{\alpha_{2n-1}}{\rightleftharpoons}} S_{n-1} K \xrightarrow{\alpha_{2 n+1}} S_n+K, \\
& S_n+F \underset{\beta_{2 n}}{\stackrel{\beta_{2 n+1}}{\rightleftharpoons}} S_n F \underset{\beta_{2 n-2}}{\stackrel{\beta_{2 n-1}}{\rightleftharpoons}} \ldots \underset{\beta_4}{\stackrel{\beta_5}{\rightleftharpoons}} S_2 F \underset{\beta_2}{\stackrel{\beta_3}{\rightleftharpoons}} S_1 F \xrightarrow{\beta_1} S_0+F.
\end{split}
\end{align}
The parameter vector is
\begin{align*}
    \theta
    &=\left(\theta_1, \cdots, \theta_{4n+2}\right) =\left(\alpha_1, \cdots, \alpha_{2n+1}, \beta_1, \cdots, \beta_{2n+1}\right) .
\end{align*}
Let $X^\theta(t)=\left(S^{\theta}_0(t), S^{\theta}_n(t), K^{\theta}(t), F^{\theta}(t), (S_0 K)^{\theta}(t), \cdots, (S_{n-1} K)^{\theta}(t), (S_1 F)^{\theta}(t), \dots, (S_n F)^{\theta}(t)\right)$ denote the state vector of the chemical reaction network at time $t$. 

For all parameter vectors $\theta$, we use the initial condition
\begin{align*}
&S^{\theta}_0(0)=100, \quad S^{\theta}_n(0)= 0,\quad K^{\theta}(0)=10, \quad F^{\theta}(0)=10, \\
& (S_i K)^{\theta}(0)=0, \quad i\in \{0, \dots, n-1\} ,\quad (S_j F)^{\theta}(0)=0, \quad j\in \{1, \dots, n\}.
\end{align*}
This initial condition is fixed throughout the numerical experiments.
As a representative quantity of interest, we consider the expected number of $S_0$ molecules at a fixed terminal time $T=30$:
\begin{align}
\label{eq:g}
    g(\theta) = \E\left[f\left(X^\theta(T)\right)\right] = \E\left[S_0^\theta(T)\right].
\end{align}

In the numerical experiments below, we compare the performance of all three coupling methods (CRP, SC, and MSC) with the baseline independent simulation (Indep)  for estimating parameter sensitivities of $g\left(\theta\right)$ using the following rate constants
\begin{align*}
&\alpha_1=0.02, \quad \alpha_2=0.2, \quad \alpha_{2n+1}=2.0,\\
&\alpha_{2k+1} = 1, \quad \alpha_{2k+2} = 0.5, \quad k \in \{1,\dots,n-1\}; \\
& \beta_1=1.5,  \quad \beta_{2n}=0.15,\quad \beta_{2n+1}=0.015,\\
&\beta_{2k-2} = 0.4, \quad \beta_{2k-1} = 0.8, \quad k \in \{2,\dots,n\}.
\end{align*}

To keep the subsequent application subsections concise, we note here that the processive $n$-site network~\eqref{djdichfn} has three conserved quantities: 
\begin{align*}
    S^{\theta}_{\mathrm{tot}} &= S_0^\theta + S_n^\theta + \sum_{i = 0}^{n-1} \left(S_i K\right)^\theta + \sum_{i = 1}^{n} \left(S_i F\right)^\theta, \\
    K^{\theta}_{\mathrm{tot}} &= K^\theta + \sum_{i = 0}^{n-1} \left(S_i K\right)^\theta, \\
    F^{\theta}_{\mathrm{tot}} &= F^\theta + \sum_{i = 1}^{n} \left(S_i F\right)^\theta. 
\end{align*}
These three conserved quantities together involve every species in the network. Consequently, each species count is bounded by the corresponding conserved total, and the reachable state space is finite.
 Together with the common initial condition and the form of the mass-action intensities, this verifies  the hypotheses needed to apply Theorems~\ref{thm:finite-state-generator-expansion} and~\ref{thm:indep-finite-state-variance}; see also Remark~\ref{remark:phosphorylation}. We therefore apply the conclusions of those results directly in the subsections below.

\subsection{Simultaneous computation of many first derivatives}
\label{Motivation 1}

We first consider the simultaneous estimation of all $K$ first-order parametric sensitivities
$\frac{\partial}{\partial\theta_j}g(\theta)$, $j\in\{1,\dots,K\}$, where $g$ is defined in \eqref{eq:g} with $T = 30$.
For each parameter direction, we use the forward finite-difference approximation introduced in Section~\ref{sec:FD_settings}, which has bias $O(\varepsilon_j)$ and requires simulation of the pair
$\left(X^\theta,X^{\theta+\varepsilon_j e_j}\right)$.

For this application, the key distinction among the methods is the combination of variance reduction and the number of simulated component paths required to estimate all $K$ sensitivities.
 For each parameter direction, Theorem~\ref{equal theorem111} implies that the corresponding MSC and SC finite-difference estimators have the same variance, while previous numerical studies have found that SC often provides greater variance reduction than CRP \cite{anderson2012efficient,srivastava2013comparison}. 
The relevant notion of efficiency is the total computational work required to achieve a given RMSE. Wall-clock time, however, depends strongly on implementation details and hardware.
We therefore use the total number of simulated component paths as a simple proxy for computational effort that is less sensitive to implementation details than wall-clock time.
 Since all methods simulate the same reaction network over the same time interval and at nearby parameter values, this metric provides a natural way to quantify the computational savings obtained by sharing paths across parameter directions.

Under this path-count metric, suppose that $N$ Monte Carlo replications are used for each parameter direction. SC simulates a separate nominal--perturbed pair for each of the $K$ directions and therefore requires $2KN$ component paths. By contrast, each MSC replication jointly simulates the nominal path $X^\theta$ and all $K$ perturbed paths $X^{\theta+\varepsilon_j e_j}$, $j\in\{1,\dots,K\}$, and therefore requires only $(K+1)N$ component paths. CRP and Indep likewise require $(K+1)N$ component paths under this accounting. Thus, MSC has the pairwise variance-reduction properties of SC while requiring the same number of simulated component paths as CRP and Indep.

To assess overall accuracy across all parameter directions, we use the sum of the empirically optimized  RMSEs over the $K$ directions. Following the numerical procedure described in Appendix~\ref{sec:protocol}, together with the application-specific details in Appendix~\ref{sec:Application 1 change}, we plot this quantity against the total number of simulated component paths on a base-$2$ log-log scale. Each plotted value is estimated using $500$ independent repetitions of the entire estimation procedure to reduce Monte Carlo variability.

\begin{figure}[H]
    \centering
    \includegraphics[width=0.5\linewidth]{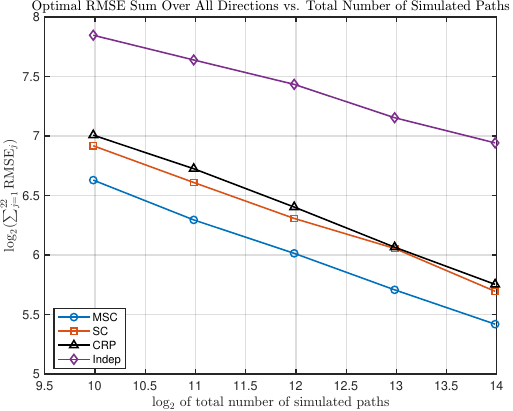}
    \caption{Base-$2$ log-log plot of the sum of the empirically optimized RMSEs over all parameter directions versus the total number of simulated component paths for the $O(\varepsilon_j)$-bias forward finite-difference estimators.}
    \label{fig:RMSE_111}
\end{figure}

\begin{table}[H]
    \centering
    \caption{Theoretical and empirical slopes under different coupling methods.}
    \label{table 111}
    \renewcommand{\arraystretch}{1.2}
    \begin{tabular}{lcccc}
    \toprule
     & \textbf{Indep} & \textbf{CRP} & \textbf{SC} & \textbf{MSC}\\
     \midrule
     \textbf{Theoretical slope} & $-0.2500$ & --- & $-0.3333$ & $-0.3333$  \\ 
     \textbf{Empirical slope} 
& $-0.2263$
& $-0.3137$
& $-0.3060$ 
& $-0.3028$\\ 
    \bottomrule
    \end{tabular}
\end{table}

The empirical slopes for Indep, SC, and MSC closely match the theoretical slopes derived in Section~\ref{General optimal RMSE scaling for finite-difference estimators}. The theoretical and empirical slopes are reported in Table~\ref{table 111}. Specifically, Indep exhibits a slope of approximately $-\frac{1}{4}$, whereas MSC and SC exhibit slopes of approximately $-\frac{1}{3}$. At present, there is no corresponding theoretical slope available for CRP. 
Among the methods considered in this example, MSC achieves the smallest sum of empirically optimized RMSEs over the range of simulated-path budgets considered.

\subsection{Higher-order finite-difference estimation of a first derivative}
\label{Motivation 2}

We next examine whether MSC improves efficiency when higher-order centered finite-difference approximations are used to reduce the bias in computing sensitivities.
For $g$ as defined in \eqref{eq:g} with $T = 30$, and for a fixed parameter $\theta_j$, we consider the centered finite-difference approximations \eqref{eq:second-order-first-derivative} and \eqref{eq:fourth-order-first-derivative}, which have bias of order $O(\varepsilon_j^2)$ and $O(\varepsilon_j^4)$, respectively. The $O(\varepsilon_j^2)$-bias estimator uses the two-path collection 
$\left(X_{[i]}^{\theta-\varepsilon_j e_j}, X_{[i]}^{\theta+\varepsilon_j e_j}\right)$ and the $O(\varepsilon_j^4)$-bias estimator uses the four-path collection 
\[
\left(X_{[i]}^{\theta+2\varepsilon_j e_j}, X_{[i]}^{\theta+\varepsilon_j e_j},X_{[i]}^{\theta-\varepsilon_j e_j}, X_{[i]}^{\theta-2\varepsilon_j e_j}\right).
\]

For the numerical comparison, we estimate the sensitivity
$\frac{\partial}{\partial\theta_1}g(\theta)$ using both centered finite-difference estimators and compare MSC, CRP, and Indep. Section~\ref{General optimal RMSE scaling for finite-difference estimators} predicts more favorable optimal RMSE scaling for MSC than for Indep for both estimators. We omit SC from this comparison for two reasons. For the two-path stencil, Theorem~\ref{equal theorem111} implies that the SC and MSC constructions have the same joint distribution, and both require two component paths per Monte Carlo replication. Thus, there is no distinction between the two methods for this estimator. For the four-path stencil, there is no unique pairwise split-coupling construction, since the four paths can be grouped into coupled pairs in several different ways, and hence there is no single canonical ``SC'' estimator to use as a benchmark.

We evaluate the methods using the empirically optimized RMSE.
 Following the numerical procedure described in Appendix~\ref{sec:protocol}, together with the application-specific modification for Application $2$ given in Appendix~\ref{sec:Application 2 change}, we plot this quantity against the total number of simulated component paths on a base-2 log-log scale. Each plotted value is estimated using $500$ independent repetitions of the complete estimation procedure.

Figure~\ref{fig:RMSE_222} presents the results for both estimators, while the theoretical and empirical slopes are reported in Table~\ref{table 222}.

\begin{figure}[H]
    \centering
    \includegraphics[width=0.5\linewidth]{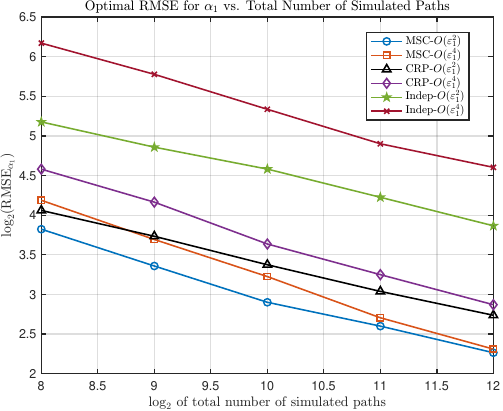}
    \caption{Base-$2$ log-log plot of the RMSEs for the $O(\varepsilon_1^2)$-bias and $O(\varepsilon_1^4)$-bias estimators versus the total number of simulated component paths.}
    \label{fig:RMSE_222}
\end{figure}

\begin{table}[H]
    \centering
    \caption{Theoretical and empirical convergence rates.}
    \label{table 222}
    \renewcommand{\arraystretch}{1.2}
    \begin{tabular}{lccc}
    \toprule
    \textbf{Method}
    &\textbf{Indep}
    &\textbf{CRP}
    &\textbf{MSC}\\
    \midrule
    \multicolumn{4}{c}{\textbf{Centered difference with bias} $O(\varepsilon_j^2)$} \\
    \midrule
    Theoretical rate
    & $-0.3333$
    & --- 
    & $-0.4000$\\
    Empirical rate
    & $-0.3275$
    & $-0.3305$
    & $-0.3896$\\
    \midrule
    
    \multicolumn{4}{c}{\textbf{Centered difference with bias} $O(\varepsilon_j^4)$} \\
    \midrule
    Theoretical rate
    & $-0.4000$
    & --- 
    & $-0.4444$\\
    Empirical rate
    & $-0.3917$
    & $-0.4274$
    & $-0.4692$\\
    \bottomrule
    
    \end{tabular}
\end{table}

The empirical slopes for Indep and MSC are close to the theoretical predictions for both finite-difference estimators. As predicted by the theory, the $O(\varepsilon_1^4)$-bias estimator exhibits a more favorable asymptotic RMSE slope than the $O(\varepsilon_1^2)$-bias estimator. Over the range of path-count budgets considered here, however, this asymptotic advantage has not yet overcome the additional computational cost of the four-point stencil, and the two-point MSC estimator retains a slightly smaller RMSE. For each stencil, MSC achieves the smallest RMSE among the methods considered.

\subsection{Higher-order derivatives}
\label{Motivation 3}

We use the final application to illustrate a central advantage of MSC: it provides a clean framework for estimators that require several coupled paths simultaneously. This is particularly useful for finite-difference estimation of higher-order sensitivities, where the required stencil may involve several nearby parameter values.

For example, the centered finite-difference approximation \eqref{eq:third-derivative-fd} for a third derivative requires the four-path collection
\begin{equation*}
\left(
X^{\theta+2\varepsilon_j e_j},
X^{\theta+\varepsilon_j e_j},
X^{\theta-\varepsilon_j e_j},
X^{\theta-2\varepsilon_j e_j}
\right).
\end{equation*}
MSC couples these four processes within a single joint construction. In contrast, pairwise SC does not specify a unique four-path coupling: the four paths can be grouped into coupled pairs in several different ways, leading to different joint constructions. Thus, MSC provides a general-purpose framework that avoids the need to design an application-specific arrangement of pairwise couplings, an advantage that becomes more pronounced as the stencil becomes wider. Specialized methods can provide stronger variance reduction for particular estimators; for example, the double-coupled finite-difference method of Wolf and Anderson \cite{wolf2012finite} is tailored to second-order sensitivities. Our goal here is different: to provide a single coupling framework that applies directly to arbitrary collections of nearby parameter values.

For $g$ as defined in \eqref{eq:g} with $T = 30$, we estimate the third-order sensitivity
$\frac{\partial^3}{\partial\theta_3^3}g(\theta)$
using the centered finite-difference approximation \eqref{eq:third-derivative-fd}, which has bias $O(\varepsilon_3^2)$. We compare the empirically optimized RMSEs obtained using MSC, CRP, and Indep under the same total simulated-path budget. Following the numerical procedure described in Appendix~\ref{sec:protocol}, together with the application-specific details in Appendix~\ref{sec:Application 3 change}, we plot the RMSE against the total number of simulated component paths on a base-$2$ log-log scale. Each plotted value is estimated using $500$ independent repetitions of the complete estimation procedure.

\begin{figure}[H]
    \centering
    \includegraphics[width=0.5\linewidth]{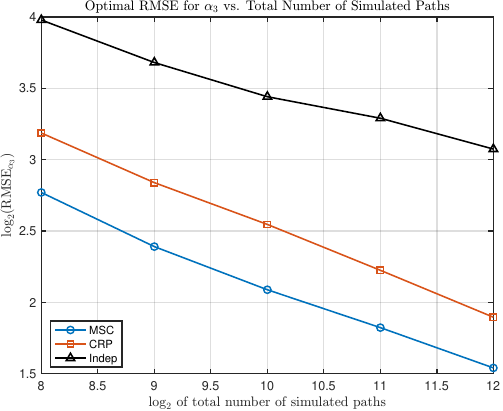}
    \caption{Base-$2$ log-log plot of the empirically optimized RMSE for the $O(\varepsilon_3^2)$-bias third-order sensitivity estimator versus the total number of simulated component paths.}
    \label{fig:RMSE_333}
\end{figure}

\begin{table}[H]
    \centering
    \caption{Theoretical and empirical convergence rates for the empirically optimized RMSE}
    \label{table 333}
    \renewcommand{\arraystretch}{1.2}
    \begin{tabular}{lccc}
    \toprule
    \textbf{Method}
    &\textbf{Indep}
    &\textbf{CRP}
    &\textbf{MSC}\\
    \midrule
    Theoretical rate & $-0.2000$ & --- & $-0.2222$ \\ 
    Empirical rate & $-0.2261$ & $-0.3227$ & $-0.3074$ \\
    \bottomrule
    \end{tabular}
\end{table}

Figure~\ref{fig:RMSE_333} presents the numerical results, and the theoretical and empirical slopes are reported in Table~\ref{table 333}. Over the range of path-count budgets considered, the empirical slopes for Indep and MSC are steeper than the corresponding asymptotic theoretical predictions; no theoretical slope is available for CRP. We believe this discrepancy reflects the fact that the simulations have not yet entered the small-$\varepsilon$ regime underlying the asymptotic theory. Nevertheless, MSC achieves the smallest empirically optimized RMSE among the methods considered throughout the displayed range.

\section{Discussion}
\label{sec:discussion}

The motivation for MSC is simple. There are numerous finite-difference settings in which one wants to generate several nearby parameterized paths simultaneously, including the estimation of many first derivatives, the use of wider finite-difference stencils for a single first derivative, and the estimation of higher-order derivatives. In such settings, the common reaction path method has an attractive path-count structure because all required processes can be generated jointly, while split coupling provides stronger variance reduction but is naturally pairwise. We wanted a construction that combined these two advantages. MSC does exactly that: it generates the full collection of nearby parameterized paths simultaneously, while retaining the pairwise behavior of split coupling.

The theory supports this picture. By Theorem~\ref{equal theorem111}, each pairwise marginal of an MSC construction has the same law as the corresponding split-coupled pair, and Corollary~\ref{cor:msc-contrast-variance} extends the resulting variance control to general finite-difference contrasts involving multiple nearby paths. The finite-state theory further gives the leading-order variance behavior used in the RMSE analysis. The numerical experiments are consistent with these results: MSC achieved the smallest empirically optimized RMSE among the methods considered in each of the three applications.

\vspace{.1in}

\noindent \textbf{Acknowledgments}

DFA gratefully acknowledges financial support from the Office of the Vice Chancellor for Research at UW--Madison, with funding from the Wisconsin Alumni Research Foundation; from the Trustees of the William F. Vilas Estate; and from NSF grant DMS-2051498.

ChatGPT was used for sentence-level editing and to provide initial coding templates for the numerical simulations.

\bibliographystyle{plain}
\bibliography{references}

\appendix

\section{Numerical methodology}
\label{sec:protocol}

We fix $n=5$ in the model \eqref{djdichfn} throughout the numerical experiments. For each method, let $N$ denote the number of independent Monte Carlo replications averaged to form the sensitivity estimator. If one replication requires $p_m$ component paths under method $m$, then the total number of simulated paths is $Np_m$.

For example, in Application~1, MSC, CRP, and Indep require $K+1$ component paths per replication, while SC requires $2K$ component paths per replication. Thus, $N$ replications correspond to $(K+1)N$ simulated paths for MSC, CRP, and Indep, and $2KN$ simulated paths for SC.  In Application~2, the $O(\varepsilon_1^2)$-bias estimator requires $2$ component paths per replication, while the $O(\varepsilon_1^4)$-bias estimator requires $4$ component paths per replication. Thus, $N$ replications correspond to $2N$ and $4N$ simulated paths, respectively. In Application~3, the centered finite-difference estimator for the third derivative requires $4$ component paths per replication, so $N$ replications correspond to $4N$ simulated paths.

\begin{enumerate}
    \item \textbf{Numerical reference values.}
    \label{Step $1$: Obtaining high-accuracy numerical reference values}

    For each target sensitivity $D_j(\theta)$, we compute a numerical reference value $D_{j,\mathrm{ref}}$ using MSC with a large Monte Carlo replication count, $N_{\mathrm{ref}}=2^{17}$, and a small application-specific perturbation size. The finite-difference estimator and perturbation size used to construct the reference value are specified in the corresponding application-specific appendix.

    \item \textbf{Empirical calibration of the perturbation size.}
    \label{Step $2$: Determine the optimal perturbation sizes}

    For each coupling method $m$, target sensitivity $D_j(\theta)$, and Monte Carlo replication count  $N_0=2^8$, we consider perturbations of the form
    \begin{equation}
    \label{vjofjooefoed}
    \varepsilon_{m,j}(C)
    =
    C|\theta_j|N_0^{-\gamma_m},
    \end{equation}
    where $\gamma_m$ denotes the exponent appropriate to the coupling method and finite-difference estimator under consideration, as determined by Section~\ref{General optimal RMSE scaling for finite-difference estimators} when such a theoretical result is available. For CRP, for which we have no corresponding theory, the exponent used in the numerical experiments is treated only as an empirical tuning convention.

    For each candidate value of $C$, we independently repeat the complete estimation procedure based on $N_0$ replications $L_{\mathrm{cal}}=20$ times. If
    $\widehat D_{m,j}^{(\ell)}(C;N_0)$
    denotes the estimate from repetition $\ell$, we compute
    \begin{equation*}
    \widehat{\operatorname{RMSE}}_{m,j}(C;N_0)
    =
    \left[
    \frac{1}{L_{\mathrm{cal}}}
    \sum_{\ell=1}^{L_{\mathrm{cal}}}
    \left(
    \widehat D_{m,j}^{(\ell)}(C;N_0)
    -
    D_{j,\mathrm{ref}}
    \right)^2
    \right]^{1/2}.
    \end{equation*}
    We denote by $\widehat C_{m,j}$ the candidate value that minimizes this empirical RMSE. For other Monte Carlo replication counts $N$, the perturbation is then chosen as
    \begin{equation}
    \label{dpogy}
    \varepsilon_{m,j}(N)
    =
    \widehat C_{m,j}|\theta_j|N^{-\gamma_m}.
    \end{equation}

    \item \textbf{Evaluation at fixed path-count budgets.}
    \label{Step $3$: Estimate the parameter sensitivity using each coupling method}

    Let $p_m$ denote the number of component paths required to generate one Monte Carlo replication for the method and finite-difference estimator under consideration.
    For a fixed total path-count budget $B$, the Monte Carlo replication count is
    \begin{equation*}
    N_m(B)=\frac{B}{p_m},
    \end{equation*}
    where the budgets are selected so that $N_m(B)$ is an integer for every method being compared. Each method is then evaluated using $N_m(B)$ replications and the perturbation size in \eqref{dpogy}.

    \item \textbf{Estimation of the RMSE.}
    \label{Step $4$: Plot empirical optimal RMSE versus computational cost}

    For each method and path-count budget, we repeat the complete estimation procedure
    $L_{\mathrm{eval}}=500$ times. The reported RMSE for sensitivity $D_j(\theta)$ is
    \begin{equation*}
    \widehat{\operatorname{RMSE}}_{m,j}(B)
    =
    \left[
    \frac{1}{L_{\mathrm{eval}}}
    \sum_{\ell=1}^{L_{\mathrm{eval}}}
    \left(
    \widehat D_{m,j}^{(\ell)}(B)
    -
    D_{j,\mathrm{ref}}
    \right)^2
    \right]^{1/2}.
    \end{equation*}
    The resulting RMSEs are plotted against the relevant path-count budget on a base-$2$ log-log scale.
\end{enumerate}

\section{Application 1: implementation details}
\label{sec:Application 1 change}

In Application~1, with $g$ defined in \eqref{eq:g}, the target quantities are all $K=22$ first-order sensitivities
\begin{equation*}
D_j(\theta)
=
\frac{\partial}{\partial\theta_j}g(\theta),
\qquad
j\in\{1,\dots,K\}.
\end{equation*}
The estimators being compared use the forward finite-difference approximation introduced in Section~\ref{sec:FD_settings}.

For the numerical reference values, we use the centered finite-difference estimator
\begin{equation*}
\widehat D_{j,\mathrm{ref}}
=
\frac{1}{N_{\mathrm{ref}}}
\sum_{i=1}^{N_{\mathrm{ref}}}
\frac{
f\left(X_{[i]}^{\theta+h_j^{\mathrm{ref}}e_j}(T)\right)
-
f\left(X_{[i]}^{\theta-h_j^{\mathrm{ref}}e_j}(T)\right)
}{
2h_j^{\mathrm{ref}}
},
\end{equation*}
where
\begin{equation*}
N_{\mathrm{ref}}=2^{17},
\qquad
h_j^{\mathrm{ref}}=0.005|\theta_j|.
\end{equation*}
The required paths for all parameter directions are generated jointly using MSC. The resulting reference values are checked for stability by repeating the calculation with
$h_j^{\mathrm{ref}}/2$. The sum of the absolute changes in the $22$ reference sensitivities was $3.53$, approximately $8.3\%$ of the smallest aggregate RMSE reported in Application~1. We therefore regard the reference values as sufficiently stable for the numerical comparison.

The perturbation exponents and the number of component paths required for one Monte Carlo replication are summarized below.

\begin{table}[H]
\centering
\caption{Perturbation exponents and component paths per replication for Application~1.}
\renewcommand{\arraystretch}{1.2}
\begin{tabular}{lcc}
\toprule
Method
&
$\gamma_m$
&
$p_m$
\\
\midrule
MSC
&
$\frac{1}{3}$
&
$K+1=23$
\\
SC
&
$\frac{1}{3}$
&
$2K=44$
\\
CRP
&
$\frac{1}{4}$
&
$K+1=23$
\\
Indep
&
$\frac{1}{4}$
&
$K+1=23$
\\
\bottomrule
\end{tabular}
\end{table}

For CRP, for which Section~\ref{General optimal RMSE scaling for finite-difference estimators} provides no corresponding theoretical scaling result, we use $\gamma_{\mathrm{CRP}}=1/4$ as an empirical tuning convention.

For each method $m$ and parameter direction $j$, we calibrate
$\widehat C_{m,j}$
at $N_0=2^8$ using the candidate values
\begin{equation*}
C\in\{0.1,0.2,\dots,15.0\}
\end{equation*}
and the procedure in Appendix~\ref{sec:protocol}. Thus, for the final simulations,
\begin{equation*}
\varepsilon_{m,j}(N)
=
\widehat C_{m,j}|\theta_j|N^{-\gamma_m}.
\end{equation*}

To compare the methods at identical total path-count budgets, we use
\begin{equation*}
B_k
=
(K+1)(2K)2^k,
\qquad
k\in\{0,\dots,4\}.
\end{equation*}
This choice is a common multiple of the $K+1$ paths required by MSC, CRP, and Indep and the $2K$ paths required by SC. Consequently,
\begin{align*}
N_{\mathrm{MSC}}(B_k)
=
N_{\mathrm{CRP}}(B_k)
&=
N_{\mathrm{Indep}}(B_k)
=
(2K)2^k,\\
N_{\mathrm{SC}}(B_k)
&=
(K+1)2^k.
\end{align*}

For each method and budget, we estimate the componentwise RMSEs using
$L_{\mathrm{eval}}=500$ independent repetitions and report the aggregate quantity
\begin{equation*}
\mathcal R_m(B_k)
=
\sum_{j=1}^K
\widehat{\operatorname{RMSE}}_{m,j}(B_k),
\end{equation*}
which is the quantity plotted in Figure~\ref{fig:RMSE_111}.

\section{Application 2: implementation details}
\label{sec:Application 2 change}

In Application~2, with $g$ defined in \eqref{eq:g}, the target quantity is the first-order sensitivity
\begin{equation*}
D(\theta)
=
\frac{\partial}{\partial\theta_1}g(\theta).
\end{equation*}
The estimators being compared are the centered finite difference approximations with bias of order $O(\varepsilon_1^2)$ and $O(\varepsilon_1^4)$. 

For the numerical reference value, we use the centered finite-difference estimator
\begin{equation*}
\widehat D_{\mathrm{ref}}
=
\frac{1}{N_{\mathrm{ref}}}
\sum_{i=1}^{N_{\mathrm{ref}}}
\frac{
-f\left(X_{[i]}^{\theta+2h^{\mathrm{ref}}e_1}(T)\right)
+
8f\left(X_{[i]}^{\theta+h^{\mathrm{ref}}e_1}(T)\right)
-
8f\left(X_{[i]}^{\theta-h^{\mathrm{ref}}e_1}(T)\right)
+
f\left(X_{[i]}^{\theta-2h^{\mathrm{ref}}e_1}(T)\right)
}{
12h^{\mathrm{ref}}
},
\end{equation*}
where
\begin{equation*}
N_{\mathrm{ref}}=2^{17},
\qquad
h^{\mathrm{ref}}=0.1|\theta_1|.
\end{equation*}
The four required paths are generated jointly using MSC. The resulting reference value is checked for stability by repeating the calculation with
$h^{\mathrm{ref}}/2$. The resulting change in the reference value was $0.046$, approximately $0.95\%$ of the smallest RMSE reported in Application~2.   We therefore regard the reference values as sufficiently stable for the numerical comparison.

The perturbation exponents $\gamma_{m,b}$ and the numbers of component paths per Monte Carlo replication $p_{m,b}$ are summarized below.

\begin{table}[H]
\centering
\caption{Perturbation exponents and component paths per replication for Application~2.}
\renewcommand{\arraystretch}{1.2}
\begin{tabular}{lccc}
\toprule
Method
&
Bias order
&
$\gamma_{m,b}$
&
$p_{m,b}$
\\
\midrule
MSC
&
$O(\varepsilon_1^2)$
&
$\frac{1}{5}$
&
$2$
\\
MSC
&
$O(\varepsilon_1^4)$
&
$\frac{1}{9}$
&
$4$
\\
CRP
&
$O(\varepsilon_1^2)$
&
$\frac{1}{6}$
&
$2$
\\
CRP
&
$O(\varepsilon_1^4)$
&
$\frac{1}{10}$
&
$4$
\\
Indep
&
$O(\varepsilon_1^2)$
&
$\frac{1}{6}$
&
$2$
\\
Indep
&
$O(\varepsilon_1^4)$
&
$\frac{1}{10}$
&
$4$
\\
\bottomrule
\end{tabular}
\end{table}

For CRP, for which Section~\ref{General optimal RMSE scaling for finite-difference estimators} provides no corresponding theoretical scaling result, we use $\gamma_{\mathrm{CRP,2}}=1/6$ for $O(\varepsilon_1^2)$ and $\gamma_{\mathrm{CRP,4}}=1/10$ for $O(\varepsilon_1^4)$ as empirical tuning conventions.

For each method $m$ and bias order $b\in\{2,4\}$, we calibrate
the corresponding constant $\widehat C_{m,b}$  at $N_0=2^8$ using the candidate values
\begin{equation*}
C\in\{0.05,0.1,\dots,3.0\}
\end{equation*}
and the procedure in Appendix~\ref{sec:protocol}. Thus, for the final simulations,
\begin{equation*}
\varepsilon_{m,b}(N)
=
\widehat C_{m,b}|\theta_1|N^{-\gamma_{m,b}}.
\end{equation*}
To compare the methods at identical total path-count budgets, we use
\begin{equation*}
B_k
=
2^{k},  \qquad k \in \{8,\cdots,12\}.
\end{equation*}
Consequently, for each method $m\in\{\mathrm{MSC},\mathrm{CRP},\mathrm{Indep}\}$,
\begin{align*}
N_{m,2}(B_k) &= \frac{B_k}{2}=2^{k-1},\\
N_{m,4}(B_k) &= \frac{B_k}{4}=2^{k-2},
\end{align*}
where the second subscript denotes the bias order of the finite-difference estimator.
For each method, estimator and budget, we estimate the RMSE using
$L_{\mathrm{eval}}=500$ independent repetitions and report the quantity
\begin{equation*}
\widehat{\operatorname{RMSE}}_{m,b}\left( B_k\right),
\end{equation*}
which is the quantity plotted in Figure~\ref{fig:RMSE_222}.

\section{Application 3: implementation details}
\label{sec:Application 3 change}

In Application~3, with $g$ defined in \eqref{eq:g}, the target quantity is the third-order sensitivity
\begin{equation*}
D(\theta)
=
\frac{\partial^3}{\partial \theta_3^3}g(\theta).
\end{equation*}
We use the centered finite-difference approximation \eqref{eq:third-derivative-fd}, which has bias of order $O(\varepsilon_3^2)$.

For the numerical reference value, we use the centered finite-difference estimator
\begin{equation*}
\widehat D_{\mathrm{ref}}
=
\frac{1}{N_{\mathrm{ref}}}
\sum_{i=1}^{N_{\mathrm{ref}}}
\frac{
f\left(X_{[i]}^{\theta+2h^{\mathrm{ref}}e_3}(T)\right)
-2
f\left(X_{[i]}^{\theta+h^{\mathrm{ref}}e_3}(T)\right)
+2
f\left(X_{[i]}^{\theta-h^{\mathrm{ref}}e_3}(T)\right)
-
f\left(X_{[i]}^{\theta-2h^{\mathrm{ref}}e_3}(T)\right)
}{
2 \left(h^{\mathrm{ref}}\right)^3
},
\end{equation*}
where
\begin{equation*}
N_{\mathrm{ref}}=2^{17},
\qquad
h^{\mathrm{ref}}=0.2|\theta_3|.
\end{equation*}
The four required paths are generated jointly using MSC. The resulting reference value is checked for stability by repeating the calculation with
$h^{\mathrm{ref}}/2$. The resulting change in the reference value was $0.069$, approximately $2.36\%$ of the smallest RMSE reported in Application~3. We therefore regard the reference value as sufficiently stable for the numerical comparison.

The perturbation exponents and the number of component paths required for one Monte Carlo replication are summarized below.

\begin{table}[H]
\centering
\caption{Perturbation exponents and component paths per replication for Application~3.}
\renewcommand{\arraystretch}{1.2}
\begin{tabular}{lcc}
\toprule
Method
&
$\gamma_m$
&
$p_m$
\\
\midrule
MSC
&
$\frac{1}{9}$
&
$4$
\\
CRP
&
$\frac{1}{10}$
&
$4$
\\
Indep
&
$\frac{1}{10}$
&
$4$
\\
\bottomrule
\end{tabular}
\end{table}

For CRP, for which Section~\ref{General optimal RMSE scaling for finite-difference estimators} provides no corresponding theoretical scaling result, we use $\gamma_{\mathrm{CRP}}=1/10$ as an empirical tuning convention.

For each method $m$, we calibrate
$\widehat C_{m}$
at $N_0=2^8$ using the candidate values
\begin{equation*}
C\in\{0.01,0.02,\dots,2.00\}
\end{equation*}
and the procedure in Appendix~\ref{sec:protocol}. Thus, for the final simulations,
\begin{equation*}
\varepsilon_{m}(N)
=
\widehat C_{m}|\theta_3|N^{-\gamma_{m}}.
\end{equation*}
To compare the methods at identical total path-count budgets, we use
\begin{equation*}
B_k
=
2^{k},  \qquad k \in \{8,\cdots,12\}.
\end{equation*}
Consequently,
\begin{align*}
N_{MSC}(B_k) = N_{CRP}(B_k) = N_{Indep}(B_k) = \frac{2^{k}}{4} = 2^{k-2},\qquad k \in \{8,\cdots,12\}.
\end{align*}
For each method and budget, we estimate the RMSE using
$L_{\mathrm{eval}}=500$ independent repetitions and report the quantity
\begin{equation*}
\widehat{\operatorname{RMSE}}_{m}\left( B_k\right),
\end{equation*}
which is the quantity plotted in Figure~\ref{fig:RMSE_333}.

\end{document}